\newtheorem{problem}{Problem}
\newtheorem{property}{Property}
\definecolor{OliveGreen}{rgb}{0,0.6,0}
\newcommand{\tick}{\textcolor{OliveGreen}{\ding{52}}}
\def\BibTeX{{\rm B\kern-.05em{\sc i\kern-.025em b}\kern-.08em
		T\kern-.1667em\lower.7ex\hbox{E}\kern-.125emX}}
\newcommand{\fgr}[3][\relax]{%
	\begin{figure}[htp]%
		\centering
		\includegraphics[#2]{#3}%
		\vspace{-0.1in}
		\ifx\relax#1\else\caption{{#1}}\fi
	\end{figure}%
}
\newcommand{\method}{{\sc CODEtect}\xspace}
\newcommand{\glocalkd}{{\sc GLocalKD}\xspace}
\newcommand{\gbad}{{\sc Gbad}\xspace}
\newcommand{\iforest}{{\sc iForest}\xspace}
\newcommand{\entropy}{{\sc Entropy}\xspace}
\newcommand{\multi}{{\sc Multiedges}\xspace}
\newcommand{\enron}{{\sc Enron}\xspace}
\newcommand{\sh}{{\sc SH}\xspace}
\newcommand{\shsyn}{{\sc SH\_Synthetic}\xspace}
\newcommand{\kd}{{\sc KD}\xspace}
\newcommand{\hw}{{\sc HW}\xspace}
\newcommand{\cbit}{\begin{compactitem}}
	\newcommand{\ceit}{\end{compactitem}}
\newcommand{\cben}{\begin{compactenum}}
	\newcommand{\ceen}{\end{compactenum}}
\newcommand{\bal}{\begin{align}}
\newcommand{\ean}{\end{align}}
\newcommand{\bit}{\begin{itemize}}
\newcommand{\eit}{\end{itemize}}
\newcommand{\ben}{\begin{enumerate}}
\newcommand{\een}{\end{enumerate}}
\newcommand{\beq}{\begin{equation}}
\newcommand{\eeq}{\end{equation}}
\newcommand{\N}{\mathbb{N}}
\newcommand{\mG}{\mathcal{G}}
\newcommand{\mt}{{MT}}
\newcommand{\smt}{{SMT}}
\newcommand{\mts}{\mathcal{MT}}
\newcommand{\m}{\mathcal{M}}
\newcommand{\mM}{\mathcal{M}}
\newcommand{\mN}{\mathcal{N}}
\newcommand{\mO}{\mathcal{O}}
\newcommand{\mSO}{\mathcal{S}}
\newcommand{\mC}{\mathcal{C}}
\newcommand{\mT}{\mathcal{T}}
\newcommand{\mis}{{MIS}\xspace}
\newcommand{\wmis}{{WMIS}}
\newcommand{\go}{{G_\mathcal{O}}}
\newcommand{\vo}{{V_\mathcal{O}}}
\newcommand{\eo}{{E_\mathcal{O}}}
\newcommand{\coverset}{\mathcal{CS}}
\renewcommand*\env@matrix[1][*\c@MaxMatrixCols c]{%
	\hskip -\arraycolsep
	\let\@ifnextchar\new@ifnextchar
	\array{#1}}
\newcommand{\ls}{\log^{\star}}
\newcommand{\reminder}[1]{{\textsf{\textcolor{red}{[TODO: #1]}}}}
\newcommand{\hide}[1]{}
\algnewcommand{\LINEIfThenElse}[3]{
	\State \algorithmicif\ #1\ \algorithmicthen\ #2\ \algorithmicelse\ #3}
\algnewcommand\algorithmicforeach{\textbf{for each}}
  \providecommand\BibTeX{{%
    \normalfont B\kern-0.5em{\scshape i\kern-0.25em b}\kern-0.8em\TeX}}}
\begin{document}

\title{Detecting Anomalous Graphs in Labeled  Multi-Graph Databases}

\author{Hung T. Nguyen}
\email{hn4@princeton.edu}
\affiliation{%
  \institution{Princeton University}
  \city{Princeton}
  \state{NJ}
  \country{USA}
}

\author{Pierre J. Liang}
\affiliation{%
  \institution{Carnegie Mellon University}
  \city{Pittsburgh}
  \state{PA}
  \country{USA}}
\email{liangj@andrew.cmu.edu}

\author{Leman Akoglu}
\affiliation{%
	\institution{Carnegie Mellon University}
	\city{Pittsburgh}
	\state{PA}
	\country{USA}}
\email{lakoglu@andrew.cmu.edu}

%
%
%
%
%

\renewcommand{\shortauthors}{Hung T. Nguyen, et al.}

\begin{abstract}
  Within a large database $\mG$ containing graphs with labeled nodes and directed, multi-edges; 
how can we 
detect the anomalous graphs? Most existing work are designed for plain (unlabeled) and/or simple (unweighted) graphs.
We introduce \method, the \textit{first} approach that addresses the anomaly detection task for graph databases with such complex nature. To this end, it
identifies a small representative set $\mSO$ of structural patterns (i.e., node-labeled network motifs) that losslessly compress database $\mG$ as concisely as possible. Graphs that do not compress well are flagged as anomalous.
\method exhibits two novel building blocks: ($i$) a motif-based lossless graph encoding scheme, and ($ii$) fast memory-efficient search algorithms for $\mSO$.
We show the effectiveness of \method on transaction graph databases from three different corporations and statistically similar synthetic datasets, where existing baselines adjusted for the task fall behind significantly, across different types of anomalies and performance metrics.


\end{abstract}

\begin{CCSXML}
	<ccs2012>
	<concept>
	<concept_id>10002951.10003227.10003351</concept_id>
	<concept_desc>Information systems~Data mining</concept_desc>
	<concept_significance>500</concept_significance>
	</concept>
	<concept>
	<concept_id>10010147.10010257.10010258.10010260.10010229</concept_id>
	<concept_desc>Computing methodologies~Anomaly detection</concept_desc>
	<concept_significance>500</concept_significance>
	</concept>
	</ccs2012>
\end{CCSXML}

\ccsdesc[500]{Information systems~Data mining}
\ccsdesc[500]{Computing methodologies~Anomaly detection}
\keywords{Graph anomaly detection, Graph encoding, Graph motifs}

\maketitle

\setlength{\textfloatsep}{6pt}

\section{Introduction}
\label{sec:intro}

Given hundreds of thousands of annual transaction records 
of a corporation, how can we identify the abnormal ones, which may indicate entry errors or employee misconduct? 
How can we spot anomalous daily email/call interactions or software programs with bugs?

We introduce a novel anomaly detection technique called \method, for \textit{node-Labeled, Directed, Multi-graph (LDM)} \textit{databases} which emerge from many applications.
Our motivating domain is accounting,
where
each 
transaction record is represented by a graph in which the nodes are accounts and directed edges reflect transactions. Account types (revenue, expense, etc.) are depicted by (discrete) labels and separate transactions between the same pair of accounts create edge multiplicities. 
The problem is then identifying the anomalous graphs within LDM graph databases.
In these abstract terms, \method applies more broadly to other domains exhibiting graph data with such complex nature, e.g., detecting anomalous  
employee email graphs with job titles as labels, call graphs with geo-tags as labels, control flow graphs with function-calls as labels, etc. What is more, \method{} can handle simpler settings with any subset of the LDM properties.

Graph anomaly detection has been studied under various non-LDM settings.
Most of these work focus on detecting anomalies within a \textit{single} graph; either plain (i.e., unlabeled), attributed (nodes exhibiting an {array} of (often continuous) features), or dynamic (as the graph changes over time) \cite{conf/pakdd/AkogluMF10,perozzi2014focused,conf/kdd/EswaranFGM18,shin2016corescope,perozzi2016scalable,hooi2017graph,journals/datamine/AkogluTK15,conf/kdd/ManzoorMA16} (See Table~\ref{tab:related_comparison_t} for overview).
None of these applies to our setting, as we are to detect graph-level anomalies within a graph \textit{database}.
There exist related work for node-labeled graph databases \cite{NobleC03}, which however does not handle multi-edges, and as we show in the experiments (Sec. \ref{sec:experiment}), cannot tackle the problem well.

Recently, general-purpose embedding/representation learning techniques achieve state-of-the-art results in graph classification tasks \cite{grover2016node2vec,niepert2016learning,hamilton2017inductive,narayanan2017graph2vec,fan2018gotcha,peng2018graph}. However, they do not tackle the anomaly detection problem \textit{per se}---the embeddings need to be fed to an off-the-shelf vector outlier detector.
Moreover, most embedding methods \cite{grover2016node2vec,hamilton2017inductive,fan2018gotcha} produce \emph{node} embeddings; how to use those for {graph-level} anomalies is unclear. Trivially aggregating node representations, e.g., by mean or max pooling, to obtain the entire-graph representation provides suboptimal results \cite{narayanan2017graph2vec}. 
Graph embedding techniques \cite{narayanan2017graph2vec,peng2018graph} as well as graph kernels \cite{yanardag2015deep,ShVi09} (paired with a state-of-the-art detector), yield poor performance as we show through experiments (Sec. \ref{sec:experiment}), possibly because embeddings capture general patterns, leaving rare structures out, which are critical for anomaly detection. 

Our main contributions are summarized in the following:

\cbit

\item {\bf Problem Formulation:~} 
Motivated by application to business accounting, 
we consider the anomaly detection problem in labeled directed multi-graph (LDM) databases and propose \method; (to our knowledge) the \textit{first} method to detect anomalous graphs with such complex nature (Sec.~\ref{sec:related}). 
\method also generally applies to simpler, non-LDM settings.
The main idea is to identify a few representative network motifs that are used to encode
the database in a lossless fashion as succinctly as possible. \method then flags those graphs that do not compress well under this encoding as anomalous (Sec.~\ref{sec:problem}). 

\item {\bf New Encoding \& Search Algorithms:~} The graph encoding problem is two-fold: how to encode and which motifs to encode with. To this end, we introduce (1) new lossless motif and graph encoding schemes (Sec.~\ref{sec:encoding}), and (2) efficient search algorithms for identifying key motifs with a goal to minimize the total encoding cost 
(Sec.~\ref{sec:algo}).

\item {\bf Real-world Application:~} In collaboration with industry, 
we apply our proposed techniques to annual transaction records from three different corporations, from small- to large-scale. We show the superior performance of \method over existing baselines in detecting injected anomalies that mimic certain known malicious schemes in accounting.
Case studies on those as well as the public Enron email database further show the effectiveness of \method in spotting noteworthy instances (Sec.~\ref{sec:experiment}). To facilitate reproducibility, we also confirm our performance advantages on statistically similar datasets resembling our real-world databases.

\ceit

\vspace{0.05in}
\noindent
{\bf Reproducibility.~} All source code as well as public-domain and synthetic data are shared at \url{https://bit.ly/2P0bPZQ}. 

\section{Related Work}
\label{sec:related}

\hide{
\method is designed to detect anomalous graphs within a \textit{database} containing graphs with complex properties; such as node labels and/or multi/weighted, and/or directed edges.  To our knowledge, \textit{there exists no other work for this task} that is able to handle graphs with such nature. Table \ref{tab:related_comparison_t} gives a qualitative comparison to existing art, described as follows.	
			
Graph anomaly detection has been the focus of many work \cite{conf/pakdd/AkogluMF10,perozzi2014focused,shin2016corescope,perozzi2016scalable,hooi2017graph} (See \cite{journals/datamine/AkogluTK15} for a survey.) However, these do not apply to detecting anomalies within a graph \textit{database}, as they are designed to find node/edge/subgraph anomalies within a \textit{single} graph.
Several work for detecting anomalies among a set or series of graphs \cite{conf/kdd/EswaranFGM18,conf/kdd/ManzoorMA16,NobleC03} cannot simultaneously handle all the graph properties that \method is designed for, such as node labels or edge weights. 
}

\noindent
{\bf Graph Anomaly Detection:}
Graph anomaly detection has been studied under various settings for plain/attributed, static/dynamic, etc. graphs, including the most recent deep learning based approaches \cite{conf/pakdd/AkogluMF10,perozzi2014focused,conf/kdd/EswaranFGM18,shin2016corescope,perozzi2016scalable,hooi2017graph,ding2019deep,zhao2020error} (See \cite{journals/datamine/AkogluTK15} and {\cite{ma2021comprehensive} for a survey.)
These works focus on detecting \textit{node/edge/subgraph anomalies} within a \textit{single} graph,
none of which applies to our setting, as we are to detect anomalous graphs (or \textit{graph-level anomalies}) within a graph \textit{database}.

On anomalous graph detection in graph databases,
{\sc Gbad} \cite{journals/ida/EberleH07} has been applied to  
flag graphs as anomalous if they experience low compression via discovered substructures 
over the iterations. Further, it has been used to identify graphs
that contain substructures $S'$ with small differences (a few modifications, insertions, or deletions) from the best one $S$, which are attributed to malicious behavior \cite{journals/ida/EberleH07}. {\sc Gbad} also has very high time complexity due to the nested searches for substructures to compress the graphs through many iterations (failed to complete on multiple cases of our experiments - see Section~\ref{sec:experiment}).
Our work is on the same lines with these work in principle, however our encoding scheme is lossless.  Moreover, these work cannot handle graphs with weighted/multi edges.
There exist other graph anomaly detection approaches \cite{conf/kdd/ManzoorMA16,conf/kdd/EswaranFGM18}, however none of them simultaneously handle node-labeled graphs with multi-edges. \textsc{SnapSketch} \cite{paudel2020snapsketch} was introduced recently as an unsupervised graph representation approach for intrusion detection in a graph stream and showed better detection than previous works \cite{conf/kdd/ManzoorMA16,conf/kdd/EswaranFGM18}, however, \textsc{SnapSketch} was originally designed to work on undirected graphs. Note also that these works \cite{conf/kdd/ManzoorMA16,conf/kdd/EswaranFGM18,paudel2020snapsketch} focus on graph streams, i.e., time-ordered graphs, and may not work well in our setting of \textit{unordered} graph databases. We present a qualitative comparison of related work to \method in Table \ref{tab:related_comparison_t}.

\setlength{\tabcolsep}{4pt}

\begin{table}[!t]
	\caption{Comparison with popular approaches to graph anomaly detection, in terms of distinguishing properties.}
		\begin{tabular}{>{\centering\arraybackslash}m{0.08\textwidth} | >{\centering\arraybackslash}m{0.3\textwidth} | >{\centering\arraybackslash}m{.09\linewidth}|>{\centering\arraybackslash}m{.09\linewidth}|>{\centering\arraybackslash}m{.09\linewidth}|>{\centering\arraybackslash}m{.085\linewidth}|>{\centering\arraybackslash}m{0.095\linewidth} }
			\toprule
			\multicolumn{2}{l|}{\textbf{Methods vs. Properties}} & Graph database & Node-labeled & Multi/ Weighted & Directed & Anomaly detection \\
			\midrule
			\multirow{3}{0.1\textwidth}{\textbf{Graph Emb.}} & node2vec\cite{grover2016node2vec}, GraphSAGE\cite{hamilton2017inductive} & & \tick & \tick & \tick & \\ 
			\cline{2-7}
			&graph2vec\cite{narayanan2017graph2vec}, Metagraph2vec\cite{fan2018gotcha}, GIN\cite{DBLP:conf/iclr/XuHLJ19} & \tick & \tick & & \tick & \\
			\cline{2-7}
			&  PATCHY-SAN\cite{niepert2016learning}, MA-GCNN\cite{peng2018graph}, Deep Graph Kernels \cite{yanardag2015deep} & \tick & \tick & \tick & \tick & \\
			\hline
			\multirow{7}{0.1\textwidth}{\textbf{Graph Anom. Detect.}} & OddBall\cite{conf/pakdd/AkogluMF10} & & & \tick & \tick & \tick \\ 
			\cline{2-7}
			& FocusCO\cite{perozzi2014focused}, AMEN\cite{perozzi2016scalable}, \textsc{Dominant}\cite{ding2019deep} & & \tick & & \tick & \tick\\ 
			\cline{2-7}
			& \textsc{GAL}\cite{zhao2020error} & & & \tick & \tick & \tick \\
			\cline{2-7}
			& CoreScope\cite{shin2016corescope} & & & & & \tick \\ 
			\cline{2-7}
			& FRAUDAR\cite{hooi2017graph} & & & &\tick & \tick \\ 
			\cline{2-7}
			& StreamSpot\cite{conf/kdd/ManzoorMA16}, GBAD\cite{journals/ida/EberleH07} & \tick & \tick & & \tick & \tick \\                    
			\cline{2-7}
			& SpotLight\cite{conf/kdd/EswaranFGM18} & \tick & & \tick & \tick & \tick\\ 
			\cline{2-7}
			& \textsc{SnapSketch}\cite{paudel2020snapsketch} & \tick & & \tick & & \tick \\
			\cline{2-7}
			& \textsc{GLocalKD}\cite{ma2022deep} & \tick & \tick & & & \tick \\
			\cmidrule{2-7}\morecmidrules\cmidrule{2-7}
			& \method [this paper] & \tick & \tick & \tick & \tick & \tick \\
			\bottomrule
	\end{tabular}
	\label{tab:related_comparison_t} 
\end{table}

\hide{
Recently, a body of graph embedding methods has been developed, able to handle graphs with complex properties \cite{grover2016node2vec,niepert2016learning,hamilton2017inductive,narayanan2017graph2vec,fan2018gotcha,peng2018graph,DBLP:conf/iclr/XuHLJ19}. Those as well as graph kernels \cite{yanardag2015deep,ShVi09} produce vector representations. However, they do not tackle anomaly detection \textit{per se}. Such representations need to be input to a certain choice of an off-the-shelf detector to perform anomaly detection. While such representations capture general structural patterns, we find they are not suitable for anomaly detection as shown in the experiments.
}

\noindent {\bf Graph Embedding for Anomaly Detection:} Recent graph embedding methods \cite{grover2016node2vec,niepert2016learning,hamilton2017inductive, narayanan2017graph2vec,fan2018gotcha,peng2018graph,ding2019deep,zhao2020error} and graph kernels \cite{yanardag2015deep,ShVi09} find a latent representation of node, subgraph, or the entire graph and have been shown to perform well on classification and link prediction tasks. However, graph embedding approaches, like \cite{grover2016node2vec,hamilton2017inductive,fan2018gotcha}, learn a node representation, which is difficult to use directly for detecting anomalous graphs. Peng et al. \cite{peng2018graph} propose a graph convolutional neural network via motif-based attention, but, this is a supervised method and, thus, not suitable for anomaly detection. Our experimental results show that other recent graph embedding \cite{narayanan2017graph2vec} and graph kernel methods \cite{yanardag2015deep}, that produce a direct graph representation, when combined with a state-of-the-art anomaly detector have low performance and are far less accurate than \method. Concurrent to our work, graph neural networks for anomalous graph detection is studied in \cite{zhao2021using,ma2022deep} that examines end-to-end graph anomaly detection. Further, \cite{bandyopadhyay2020outlier} investigates outlier resistant architectures for graph embedding. A key challenge, in general, for deep learning based models for \textit{unsupervised} anomaly detection is their sensitivity to many hyper-parameter settings (including those for regularization: such as weight decay, and drop-out rate; optimization: such as learning rate, and achitecture: such as depth, width, etc.), which are not straightforward to set in the absence of any ground-truth labels. Distinctly, our work leverages the Minimum Description Length principle and does not exhibit any hyper-parameters.

\noindent
{\bf Graph Motifs:~}
Network motifs have proven useful in understanding the 
functional units and organization of complex systems \cite{milo2002network,Vazquez17940,Benson163}. 
Motifs have also been used as features for network classification \cite{milo2004superfamilies},
community detection \cite{Arenas2008,YiBiKe18}, and
in graph kernels for graph comparison \cite{ShVi09}.
On the algorithmic side, several works have designed fast techniques for identifying significant motifs \cite{KashtanIMA04,Kashani09,BloemR17,bloem2020large}, where a sub-graph is regarded as a motif only if its frequency is higher than expected under a network null model.

Prior works on network motifs mainly focus on 3- or 4-node motifs in undirected unlabeled/plain graphs \cite{widm1226,conf/kdd/ElenbergSBD15,conf/kdd/SaneiMehriST18,conf/www/UganderBK13}, either using subgraph frequencies in the analysis of complex networks or most often developing fast algorithms for counting (e.g., triangles) (See \cite{SeTi19} for a recent survey).
Others have also studied
directed \cite{Benson163}
and
temporal motifs \cite{kovanen2011temporal,paranjape2017motifs}.
Most relatedly, there is a recent work on  
node-labeled subgraphs referred to as heterogeneous network motifs \cite{Rossi19}, 
where again, the focus is on scalable counting.
Our work differs in using heterogeneous motifs as building blocks of a graph encoding scheme, 
toward the goal of anomaly detection.

\hide{
Finally, frequent patterns and lossless compression via the MDL principle \cite{Rissanen78,grunwald2007minimum} have been successfully applied to anomaly detection \cite{conf/sdm/SmetsV11,AkogluTVF12}, however for regular transaction (or vector) databases. We are the first to use motif-based lossless graph encoding for the \textit{graph} anomaly detection task.
}

\noindent
{\bf Data Compression via MDL-Encoding:}
The MDL principle by Rissanen \cite{Rissanen78} states that the best theory to describe a data is the one that minimizes the sum of the size of the theory, and the size of the description of data using the theory. The use of MDL has a long history in itemset mining \cite{conf/icdm/TattiV08,VreekenLS11}, for transaction (tabular) data, 
also applied to anomaly detection \cite{conf/sdm/SmetsV11,AkogluTVF12}.

MDL has also been used for graph compression.
Given a pre-specified list of well-defined structures (star, clique, etc.), it is employed
to find a succinct description of a graph in those ``vocabulary'' terms \cite{KoutraKVF14}. 
This vocabulary is later extended for dynamic graphs \cite{ShahKZGF15}.
A graph is also compressed hierarchically, by sequentially aggregating sets of nodes into super-nodes, where the best summary and associated corrections are found with the help of MDL \cite{NavlakhaRS08}.

There exists some work on  attributed graph compression \cite{TianHP08}, but the goal is to find super-nodes that represent a set of nodes that are homogeneous in some (user-specified) attributes.
{\sc Subdue} \cite{NobleC03} is one of the earliest work
 to employ MDL for substructure discovery in node-labeled graphs.
The aim is to extract the ``best'' substructure $S$ whose encoding plus the encoding of a graph after replacing each instance of $S$ with a (super-)node is as small as possible.
\section{Preliminaries \& The Problem}
\label{sec:problem}

As input, a large set of $J$ graphs $\mG = \{G_1, \ldots, G_J\}$ is given.
Each graph $G_j=(V_j, E_j, \tau)$ is a directed, node-labeled, multi-graph which may contain multiple edges that have the same end nodes.
$\tau: V_j \rightarrow \mT$ is a function that assigns labels from an alphabet $\mT$ to nodes in each graph.
The number of realizations of an edge $(u,v)\in E_j$ is called its \textit{multiplicity}, denoted $m(u,v)$.
(See {Fig. \ref{fig:demo}}(a) for example.)

\begin{figure}[!t]
	\centering
	\includegraphics[width = 0.6\linewidth]{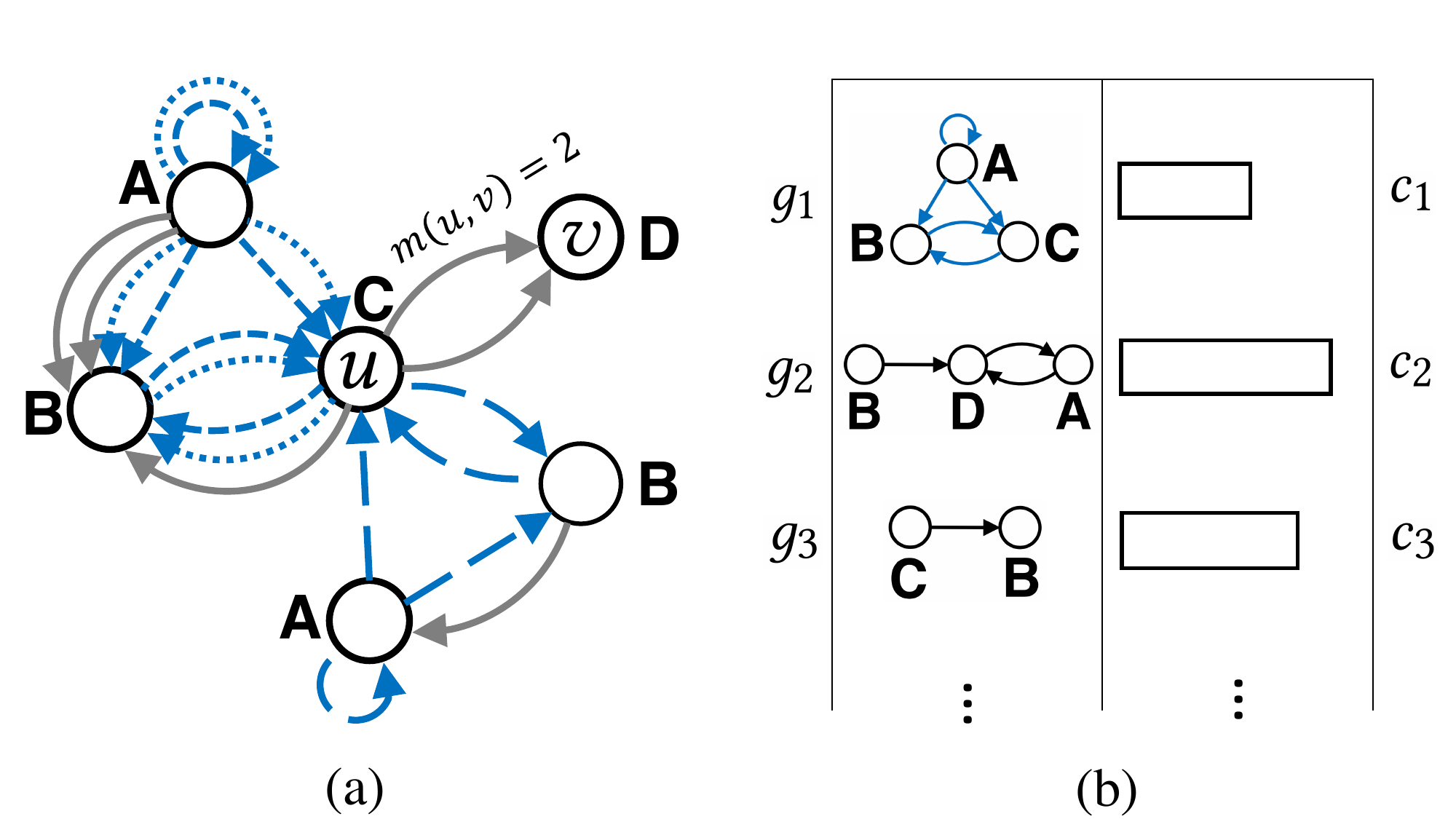}
	\caption{(a) E.g. node-labeled multi-graph where capital letters denote the node labels, and $m(u,v)$ depicts multiplicity of edge $(u,v)$ ; 
	(b) Example motif table; 1st col. lists the motifs, and 2nd col. provides the corresponding codes, 
	bar width depicting code length. The blue edges in (a) sharing the same dash type depict three different occurrences of the top motif (i.e. $g_1$) in the motif table in (b).\label{fig:demo}}
\end{figure}



Our motivating domain is 
business accounting, in which each $G_j$ corresponds to a graph representation of what-is-called a ``journal entry'': a detailed transaction record.
Nodes capture the unique accounts associated with the record, \textit{directed} edges the transactions between these accounts, and node \textit{labels} the financial statement (FS) account types (e.g., assets, liabilities, revenue, etc.). Bookkeeping data is kept as a chronological listing (called General Ledger) of each separate business transaction, 
where multiple transactions involving same account-pairs generate multi-edges between two nodes. 

Our high-level idea for finding anomalous graphs in  database $\mG$ is to identify key characteristic \textit{patterns} of the data that ``explain'' or compress the data well, and flag those graphs that do not exhibit such patterns as expected---simply put, graphs that do not compress well are anomalous.
More specifically, graph patterns are substructures or subgraphs, called \textit{motifs}, which occur frequently within the input graphs.
``Explaining'' the data is encoding each graph using the frequent motifs that it contains. The more frequent motifs we use for encoding, the more we can compress the data; simply by encoding the existence of each such motif with a short code. 

The goal is to find a (small) set of motifs that compresses the data the best.
Building on the Minimum Description Length (MDL) principle \cite{grunwald2007minimum}, we aim to find a model, namely a \textit{motif table} (denoted $\mt$) that contains a carefully selected subset of graph motifs, such that the total code length of (1) the model itself plus (2) the encoding of the data using the model is as small as possible.
In other words, we are after a small model that compresses the data the most.
The two-part objective of minimizing the total code length is given as follows.


\begin{tcolorbox}[ams align,myformula]
	\label{obj}
\underset{{\mt \subseteq \mts}}{\text{minimize}} \;\;\; L(\mt, \mathcal{G}) =  \underbrace{L(\mt)}_{\text{model code length}} \;+\; \underbrace{L(\mathcal{G} | \mt )}_{\text{data code length}} \;\;,
\end{tcolorbox}

\noindent
where $\mts$ denotes the set of all possible candidate motif tables.
The first term can be seen as a model regularizer that penalizes using an unneccesarily large set of motifs to explain the data.
The second term is the compression length of the data with the (selected) motifs and decomposes as $L(\mathcal{G} | \mt) = \sum_j L({G_j} | \mt)$ since individual journals are independent. The encoding length $L({G_j} | \mt)$ is also the {\em anomaly score} for the $j$th graph---the larger, the more anomalous. 

As such, we have a combinatorial subset selection problem toward optimizing Eq.~\eqref{obj}. To this end,
we address two subproblems outlined below. 

\begin{problem}
	Our graph encoding problem is two-fold: (1) how to encode, and (2) which motifs to encode with, or,
	\cben
\item \textbf{Encoding Schemes} (Sec. \ref{sec:encoding}): 
Define schemes for ($i$) $L(\mt)$; encoding the motifs in $\mt$, and ($ii$) $L({G_j} | \mt)$; encoding a graph with the given motifs; and
\item \textbf{Search Algorithm} (Sec. \ref{sec:algo}): 
Derive a subset selection algorithm for identifying a set of motifs to put in $\mt$.
\ceen
\end{problem}

\section{Encoding Schemes}
\label{sec:encoding}


MDL-encoding of a dataset with a model can be thought to involve a Sender and a Receiver communicating over a channel, where the Sender---who has the sole knowledge of the data---generates a bitstring based on which the
Receiver can reconstuct the original data on their end \textit{losslessly}.
To this end, the Sender first sends over the model, in our case the motif table $\mt$, which can be thought as a
``code-book'' that establishes a certain ``language'' between the Sender and the Receiver.
The Sender then encodes the data instances using the code-words in the ``code-book'', in our case the graph $G_j$'s using the motifs in the $\mt$.

It is important to note that existing works \cite{Cook94,bloem2020large} presented different graph/substructure encoding schemes, however, each has its own limitation and does not work in our settings. The encoding in \cite{Cook94} is lossy and may not reflect accurate code length, while the work in \cite{bloem2020large} is restricted on simple graphs with non-overlapping nodes of motifs' occurrences. Our encoding algorithm is both lossless and applicable on multi-graph with node-overlapping occurrences of motifs.


\subsection{Encoding the Motif Table}
\label{ssec:mt}

The motif table $\mt$ is simply a 
two-column translation table that has
motifs in the first column, and a unique code-word corresponding to each motif in the second column, as illustrated in {Fig. \ref{fig:demo}} (b).
We use $\m$ to refer to the set of motifs in $\mt$.
A motif, denoted by (lower-case) $g$, is a connected, directed, node-labeled, \textit{simple} graph, with possible self-loops on the nodes. For $g\in \m$, $code_{\mt}(g)$ (or $c$ for short) denotes its code-word.\footnote{To ensure \textit{unique} decoding, we assume
 \textit{prefix code(word)s}, in which no code is the prefix of another.}

To encode the motif table, the Sender encodes each individual motif $g_i$ in $\mt$ and also sends over the code-word $c_i$ that corresponds to $g_i$.
Afterwards, for encoding a graph with the $\mt$, every motif that finds an occurrence in the graph is simply communicated through its unique code-word only.

The specific materialization of the code-words (i.e., the bitstrings themselves) are not as important to us as their \textit{lengths}, which affect the total graph encoding length. Each code length $L(c_i)$ depends on the number of times that $g_i$ is \textit{used} in the encoding of the graphs in $\mG$, denoted $usage_{\mG}(g_i)$---intuitively, the more frequently a motif is used, the shorter its code-word is; so as to achieve compression (analogous to compressing text by assigning frequent words a short code-word). 
Formally, the optimal prefix code length for $g_i$ can be calculated through the Shannon entropy \cite{Rissanen78}:
\beq
\label{codelen}
L(c_i) \;=\; |code_{\mt}(g_i)| \;=\; -\log_2[P(g_i | \mG)]\;,
\eeq 

\noindent where $P$ is a probability distribution of $g_k \in \m$ for $\mG$:
\begin{align}
\label{prob}
P(g_i | \mG) & 
 = 
\frac{\sum_{G_j\in \mG} usage_{G_j}(g_i)}{\sum_{g_k\in \m} \sum_{G_j\in \mG} usage_{G_j}(g_k)}\;.
\end{align}

\begin{algorithm}[!t]
	\caption{{\sc Motif Encoding}}
		\begin{algorithmic}[1]
			\Require Motif $g_i=(V_i, E_i)$
			\Ensure Encoding of $g_i$ $\blacktriangleright$ Note: the values after symbol $\triangleright$ summed over the course of the algorithm provides the total encoding length $L(g_i)$
			\State Encode $n_i = |V_i|$; \# of nodes in $g_i$ $\;\; \triangleright\; L_{\N}(n_i)$\footnotemark
			
			\Repeat
			\State Pick an unmarked node $v\in V_i$ at random where $indeg(v) = 0$ (if none, pick any unmarked node), and mark $v$
			
			\Procedure{RecurseNode}{$mid(v)$}
			\State Encode $mid(v)$; $v$'s motif-node-ID $\;\; \triangleright\; \log_2(n_i)$			
			\State Encode $v$'s node label $\;\; \triangleright\; \log_2(T)$
			\State Encode \# of $v$'s out-neighbors $\mN_{\text{out}}(v)$ $ \triangleright\; L_{\N}(\text{outdeg}(v))$
			\State Encode motif-node-IDs of $\mN_{\text{out}}(v)$ $\;\; \triangleright\; \log_2{n_i\choose \text{outdeg}(v)}$
			
			\ForEach {unmarked node $u \in \mN_{\text{out}}(v)$}
			\State Mark $u$
			\State {\sc RecurseNode}$(mid(u))$
			\EndFor
			
			\EndProcedure
			
			\Until{all nodes in $V_i$ are marked}
		\end{algorithmic}
	\label{alg:motifencode}
\end{algorithm}

\footnotetext{MDL-optimal cost of integer $k$ is $L_{\N}(k) = \ls k + \log_2 c$; 
	$c \approx 2.865$; $\ls k = \log_2 k + \log_2(\log_2 k ) + \ldots$ summing only the positive terms \cite{Rissanen78}.}

We provide the details of how the motif usages are calculated in the next section, when we introduce graph encoding. 

Next, we present how a motif $g_i$ is encoded. Let $n_i$ denote the number of nodes it contains. (e.g., $g_1$ in {Fig. \ref{fig:demo}} (b) contains 3 nodes.) 
The encoding progresses recursively in a DFS (Depth First Search)-like fashion, as given in Algo. \ref{alg:motifencode}.
As noted, the encoding lengths summed over the course of the algorithm provides $L(g_i)$, which can be explicitly written as
follows:
\begin{align}
L(g_i) = & L_{\N}(n_i) + \sum_{v\in V_i} \log_2(n_i) + \log_2(T) \nonumber \\
 & + L_{\N}(\text{outdeg}(v)) + \log_2{n_i\choose \text{outdeg}(v)}
\end{align}

Overall, the total model encoding length is given as
\begin{tcolorbox}[ams align,myformula]
L(\mt) = L_{\N}(T) + \sum_{g_i\in \mt} \big[ L(g_i) + L(c_i) \big] \;,
\end{tcolorbox}
\noindent where we first encode the number of unique node labels, followed by the entries (motifs and codes) in the motif table.


\subsection{Encoding a Graph given the Motif Table}
\label{ssec:graph}

To encode a given graph $G_j$ based on a motif table $\mt$, we ``cover'' its edges by a set of motifs in the $\mt$.
To formally define {coverage}, we first introduce a few definitions.


\begin{definition}[Occurrence]
	An occurrence (a.k.a. a match) of a motif $g_i=(V_i,E_i)$ in a graph $G_j=(V_j, E_j)$ is a simple subgraph of $G_j$,
	denoted $o(G_j, g_i) \equiv g_{ij} = (V_{ij}, E_{ij})$ where $V_{ij}\subseteq V_j$ and $E_{ij}\subseteq E_j$,
	which is {\em isomorphic} to $g_i$,
	such that there exists a structure- and label-preserving bijection between node-sets $V_i$ and $V_{ij}$.
	Graph isomorphism is denoted $g_{ij} \simeq g_i$.
\end{definition}

Given a motif occurrence $g_{ij}$, we say that $g_{ij}$ \textit{covers} the edge set $E_{ij}\subseteq E_j$ of $G_j$.
The task of encoding a graph $G_j$ is then to \textit{cover all of its edges} $E_j$ using the motifs in $\mt$.  

\begin{definition}[Cover Set]
	Given a graph $G_j$ and a motif table $\mt$, $\coverset(G_j,\mt)$ is a set of motif occurrences so that
	\cbit
	\item If $g_{kj} \in \coverset(G_j,\mt)$, then $g_k\in \m$,
	\item $\bigcup_{g_{kj}\in \coverset(G_j,\mt)} E_{kj} = E_j$, and
	\item If $g_{kj}, g_{lj} \in \coverset(G_j,\mt)$, then $E_{kj} \cap E_{lj} = \emptyset$.	
	\ceit
\end{definition}
We say that $\coverset(G_j,\mt)$, or $\coverset_j$ for short, covers $G_j$.
The last item enforces the motif occurrences to cover \textit{non-overlapping edges} of a graph,
which in turn ensures that the coverage of a graph is always unambiguous.
This is mainly a computational choice---allowing overlaps would enable many possible covers, where enumerating and computing all of them would significantly increase the computational cost in the search of a motif table (See Eq. \eqref{obj}).

To encode a $G_j$ via $\mt$, the Sender
communicates the code of the motif associated with each occurrence in $G_j$'s cover set:
$$
G_j \; \rightarrow \; \{ code_{\mt}(g_i) \;|\; g_{ij} \in  \coverset(G_j,\mt)\} \;.
$$

However, it is not enough for the Receiver to simply know which collection of motifs occur in a graph in order to decode the graph \textit{losslessly}.
The Sender needs to encode further information regarding the {\em structure} of the graph. 
This can be achieved by, in addition to encoding {\em which} motif occurs, also encoding the specific \textit{graph-node-IDs} that correspond to or align with the motif-node-IDs.
The motif-node-IDs of a motif $g_i$ is simply the increasing set $\{1,\ldots,n_i\}$.
Then, the sequence of matching node IDs in graph $G_j$ is simply a permutation of $n_i$ unique values in $\{1,\ldots, |V_j|\}$, which can be encoded using $L_{\text{perm}}(|V_j|,n_i)$ bits, where
\beq
\label{perm}
L_{\text{perm}}(|V_j|,n_i) = \log_2(|V_j|\cdot|V_j-1|\cdot\ldots\cdot |V_j-n_i+1|) \;.
\eeq

Moreover, note that a motif can occur multiple times in a graph, possibly on the same set of node IDs (due to the input graphs being \textit{multi}-graphs) as well as on different sets.
We denote the different occurrences of the same motif $g_i$ in graph $G_j$'s cover set by $\mO(g_i, \coverset_j) = \langle g_{ij}^{(1)}, g_{ij}^{(2)}, \ldots, g_{ij}^{(c_{ij})} \rangle$ where $c_{ij}$ denotes the total count. 
See for example {Fig. \ref{fig:demo}} where motif $g_1$ shown in (b) exhibits three occurrences in the graph shown in (a), highlighted with dashed edges. 
Notice that two of those occurrences are on the same set of nodes, and the third one on a different set. (Note that occurrences can have different yet overlapping node sets, as in this example, as long as the edge sets are non-overlapping.)
We refer to the number of occurrences of a motif $g_i$ on the {\em same} set of nodes of a graph $G_j$, say $V\subseteq V_{j}$, as its {\em multiplicity on $V$}, denoted $m(g_i, G_j, V)$.
Formally,
\beq
m(g_i, G_j, V) = \big|\big\{ g_{ij}^{(k)} \;|\; V_{ij}^{(k)} =  V \big\} \big|\;, \text{where} \; V \subseteq V_j\;.
\eeq


Having provided all necessary definitions, Algo. \ref{alg:graphencode} presents how a graph $G_j$ is encoded.
For an occurrence $g_{kj}$ in its cover set, the corresponding motif $g_k$ is first communicated by simply sending over its code-word (line 2).
(Note that having encoded the motif table, the Receiver can translate the code-word to a specific motif structure.)
Then, the one-to-one correspondence between the occurrence nodes $V_{kj}$ and those of the motif $V_k$ is encoded (line 3).
Next, the multiplicity of the motif on the same set of nodes $V_{kj}$ is encoded (line 4) so as to cover as many edges of $G_j$ with few bits. 
Having been encoded, all those occurrences on $V_{kj}$ are then removed from $G_j$'s cover set (line 5).


Let us denote by $\overline{\coverset}_j \subseteq \coverset_j$ the \textit{largest} subset of occurrences in $\coverset_j$ with \textit{unique} node-sets, such that $V_{kj} \neq V_{lj}, \forall \{g_{kj}, g_{lk}\} \in \overline{\coverset}_j$.
Then, the overall graph encoding length for $G_j$ is given as

\begin{tcolorbox}[ams align,myformula]
	\label{anomaly_score}
	L(G_j|\mt) = \sum_{g_{kj} \in \overline{\coverset}_j} & 
	L(code_k) + L_{\text{perm}}(|V_j|,n_k)  \nonumber \\ 
	& + L_{\N}(m(g_k, G_j, V_{kj})) \;\;.
\end{tcolorbox}

The \textit{usage} count of a motif $g_i \in \m$ that is used to compute its code-word \textit{length} (Eq.s \eqref{codelen} and \eqref{prob}) is defined as
\beq
usage_{\mG}(g_i) = \sum_{G_j \in \mG} c_{ij} = \sum_{G_j \in \mG}\; \sum_{g_{kj} \in \coverset_j} \mathbbm{1}(g_{kj} \simeq g_i)
\eeq

\noindent where $\mathbbm{1}(g_{kj} \simeq g_i)$ returns 1 if $g_{kj}$ is both structure- and label-isomorphic to $g_i$, that is if $g_{kj}$ is an occurrence of $g_i$, and 0 otherwise.
The basic idea is to encode each motif \textit{only once} (as discussed in Sec. \ref{ssec:mt}), and then to encode occurrences of a motif in the data $\mG$ by simply providing a \textit{reference} to the motif, i.e., its code-word. This way we can achieve compression by assigning high-occurrence motifs a \textit{short} code-word, the principle behind Shannon's entropy.

\begin{algorithm}[!t]
	\caption{{\sc Graph Encoding}}
	\small{
		\begin{algorithmic}[1]
			\Require Graph $G_j$, Motif table $\mt$, $\coverset(G_j,\mt)$ 
			\Ensure Encoding of $G_j$ such that it can be decoded losslessly
			
			\Do
			\State Pick any occurrence $g_{kj} \in \coverset(G_j,\mt)$ and communicate $code_{\mt}(g_k)$ $\;\; \triangleright\; |code_{\mt}(g_k)| = L(code_k)$ \;\; Eq. \eqref{codelen}
			\State Encode matching graph-node-IDs 
			$\triangleright\; L_{\text{perm}}(|V_j|,n_k)$ \; Eq. \eqref{perm}
			
			\State Encode $g_k$'s {multiplicity} on $V_{kj}$
			$\;\; \triangleright\; L_{\N}(m(g_k, G_j, V_{kj}))$
			
			\State $\coverset(G_j,\mt) \leftarrow \coverset(G_j,\mt) \backslash \big\{ g_{lj} \;|\; V_{lj} =  V_{kj} \big\}$
			
			\doWhile{$\coverset(G_j,\mt) \neq \emptyset$}
			
		\end{algorithmic}
	}
	\label{alg:graphencode}
\end{algorithm}

\section{Search Algorithm}
\label{sec:algo}



Our aim is to compress as a large portion of the input graphs as possible using motifs. This goal can be restated as finding a large set of non-overlapping motif occurrences that cover these graphs.
We set up this problem as an instance of the Maximum Independent Set ($MIS$) problem on, what we call, the occurrence graph $\go$. 
In $\go$, the nodes represent motif occurrences and edges connect two occurrences that share a common edge. 
\mis ensures that occurrences in the solution are non-overlapping (thanks to independence, no two are incident to the same edge). Moreover,
it helps us identify motifs that have large usages, i.e. number of non-overlapping occurrences (thanks to maximality), which is associated with shorter code length and hence better compression.

In this section, first we describe how we set up and solve the MIS problems, which provides us with a set of candidate motifs that can go into the $\mt$ as well as their (non-overlapping) occurrences in input graphs. We then present a search procedure for selecting a subset of motifs among those candidates to minimize the total encoding length in Eq.~\eqref{obj}.

\subsection{Step 1: Identifying Candidate Motifs \& Their Occurrences}

As a first attempt, we explicitly construct a $\go$ per input graph and solve $MIS$ on it.
Later, we present an efficient way for solving $MIS$ without constructing any $\go$'s, which cuts down memory requirements drastically.

\subsubsection{\bf First Attempt: Constructing $\go$'s Explicitly}
$\;$\\
{\bf Occurrence Graph Construction.~} For each $G_j\in \mG$ we construct a $\go=(\vo,\eo)$ as follows. For $k=3,\ldots,10$, we enlist all \textit{connected} induced $k$-node subgraphs of $G_j$, each of which corresponds to an occurrence of some $k$-node motif in $G_j$. All those define the node set $\vo$ of $\go$.
If any two enlisted occurrences share at least one common edge in $G_j$, we connect their corresponding nodes in $\go$ with an edge.

Notice that we do not explicitly enumerate all possible $k$-node labeled motifs and then identify their occurrences in $G_j$, if any, which would be expensive due to subgraph enumeration (esp. with many node labels) and numerous graph isomorphism tests. The above procedure yields occurrences of all \textit{existing} motifs in $G_j$ \textit{implicitly}.

\vspace{0.05in}
\noindent
{\bf Greedy $\mis$ Solution.~} For the $\mis$ problem, we employ a greedy algorithm (Algo. \ref{alg:mis}) that sequentially selects the node with the minimum degree to include in the solution set $\mathcal{O}$. It then removes it along with its neighbors from the graph until $\go$ is empty. 
Let $deg_{\go}(v)$ denote the degree of node $v$ in $\go$ (the initial one or the one after removing nodes from it along the course of the algorithm), and
$\mN(v) = \{ u \in \vo | (u,v) \in \eo \}$ be the set of $v$'s neighbors in $\go$.

{\em Approximation ratio:~} The greedy algorithm for $MIS$ provides a $\Delta$-approximation \cite{Halldorsson97} where $\Delta = \max_{v \in \vo} deg_{\go}(v)$.
In our case $\Delta$ could be fairly large, e.g., in 1000s, as many occurrences overlap due to edge multiplicities. Here we strengthen this approximation ratio to $\min\{ \Delta, \Gamma \}$ as follows, where $\Gamma$ is around 10 in our data. 

\begin{theorem}
	For $MIS$ on occurrence graph $\go = (\vo,\eo)$, the greedy algorithm in Algo.~\ref{alg:mis} achieves an approximation ratio of $\min\{ \Delta, \Gamma \}$ where $\Delta = \max_{v \in \vo} deg_{\go}(v)$ and $\Gamma$ is the maximum number of edges in a motif that exists in the occurrence graph.
	\label{thm:mis_approx_ratio}
\end{theorem}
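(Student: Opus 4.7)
My approach is to strengthen the textbook $\Delta$-approximation for greedy maximum independent set by exploiting the specific combinatorial structure of the occurrence graph: every neighbor of an occurrence $v$ in $\go$ must share at least one graph-edge of $G_j$ with $v$, and the motif underlying $v$ carries at most $\Gamma$ such edges. Concretely, I would first prove the uniform local bound
\[
|I^* \cap N_{\go}[v]| \;\leq\; \min\{\Delta, \Gamma\}
\]
for every maximum independent set $I^*$ and every $v \in \vo$, where $N_{\go}[v] = \{v\}\cup \mathcal{N}(v)$, and then glue it to the execution of Algo.~\ref{alg:mis} via a standard charging argument.

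For the $\Delta$-side of the bound, I would split on whether $v \in I^*$. If so, independence forces $I^* \cap N_{\go}[v] = \{v\}$; if not, $I^* \cap N_{\go}[v] \subseteq \mathcal{N}(v)$, which has size at most $\deg_{\go}(v) \leq \Delta$. The $\Gamma$-side is the new content. Again I handle the trivial case $v \in I^*$ first ($1 \leq \Gamma$). Otherwise, for each $u \in I^* \cap \mathcal{N}(v)$ I would choose one graph-edge $e(u)$ that the occurrence $u$ shares with the occurrence $v$; such an edge exists by the very definition of $\eo$. This assignment is injective into the edge-set of $v$: if $e(u_1)=e(u_2)$, then $u_1$ and $u_2$ both contain that common edge and are therefore adjacent in $\go$, contradicting the independence of $I^*$. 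Hence $|I^* \cap \mathcal{N}(v)| \leq |E(v)| \leq \Gamma$.

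Given the local bound, the approximation ratio follows by the usual charging. Let greedy produce $v_1,\ldots,v_k$ in order and write $R_i$ for the vertices deleted at iteration $i$, namely $v_i$ together with its then-surviving neighbors. The sets $R_1,\ldots,R_k$ partition $\vo$, and the local bound applies to each iteration's residual graph because taking an induced subgraph only tightens both $\Delta$ and the motif edge-count. Thus $|I^* \cap R_i|\leq \min\{\Delta,\Gamma\}$, and summing over $i$ gives $|I^*| \leq k\cdot \min\{\Delta,\Gamma\} = \min\{\Delta,\Gamma\}\cdot |\mathcal{O}|$, which is the claimed ratio.

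The main obstacle is the injectivity step in the $\Gamma$-side argument: one must resist the temptation to charge an overlapping occurrence $u$ to \emph{all} of its shared edges with $v$ (which would over-count), and instead fix a single shared edge per $u$ and use the definition of $\eo$ to certify that no two distinct $u_1,u_2 \in I^*$ can be assigned the same edge of $v$. Everything else, including the reduction to per-iteration local bounds, is a routine packaging of the standard greedy MIS analysis.
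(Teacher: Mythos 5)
Your proposal is correct and follows essentially the same route as the paper's proof: both decompose the optimum over the disjoint closed neighborhoods removed at each greedy iteration and bound each piece by $\min\{\Delta,\Gamma\}$, with the $\Gamma$-side coming from the fact that independent neighbors of $v$ must occupy pairwise-distinct edges of $v$'s (at most $\Gamma$-edge) occurrence. Your explicit injective edge-assignment is just a cleaner phrasing of the paper's ``Upper-bound 2,'' so no substantive difference remains.
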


\begin{proof}
	Let $MIS(G_{\mathcal{O}})$ denote the number of nodes in the optimal solution set for $MIS$ on $G_{\mathcal{O}}$. We prove that
	\begin{align}
	|\mathcal{O}| \geq \frac{1}{\min\{\Delta, \Gamma\}} MIS(G_{\mathcal{O}}).
	\end{align}
	
	Let $G_{\mathcal{O}}(v)$ be the vertex-induced subgraph on $G_{\mathcal{O}}$ by the vertex set $\mathcal{N}_v = \{v\} \cup \mathcal{N}(v)$ when node $v$ is selected by the greedy algorithm to include in $\mathcal{O}$. $G_{\mathcal{O}}(v)$ has $\big\{(u,w) \in E_{\mathcal{O}} | u,w \in \mathcal{N}_v \big\}$ as the set of edges. In our greedy algorithm, when a node $v$ is selected to the solution set $\mathcal{O}$, $v$ and all of its current neighbors are removed from $G_{\mathcal{O}}$. Hence,
	\begin{align}
		\forall u, v \in \mathcal{O}, \mathcal{N}_u \cap \mathcal{N}_v = \emptyset. \nonumber
	\end{align}
	
	\noindent In other words, all the subgraphs $G_{\mathcal{O}}(v), v \in \mathcal{O}$ are independent (they do not share any nodes or edges). Moreover, since the greedy algorithm runs until $V_{\mathcal{O}} = \emptyset$ or all the nodes are removed from $\mathcal{O}$, we also have
	\begin{align}
		\cup_{v \in \mathcal{O}} \mathcal{N}_v = V_{\mathcal{O}}. \nonumber
	\end{align}
	
	Then, we can derive the following upper-bound for the optimal solution based on the subgraphs $G_{\mathcal{O}}(v), \forall v \in \mathcal{O}$:
	\begin{align}
		MIS(G_{\mathcal{O}}) \leq \sum_{v \in \mathcal{O}} MIS(G_{\mathcal{O}}(v)).
		\label{eq:mis_gr}
	\end{align}
	
	This upper-bound of $MIS(G_{\mathcal{O}})$ can easily be seen by the fact that the optimal solution set in $G_{\mathcal{O}}$ can be decomposed into subsets where each subset is an independent set for a subgraph $G_{\mathcal{O}}(v)$, hence, it is only a feasible solution of the $MIS$ problem on that subgraph with a larger optimal solution. 
	
	Next, we will find an upper-bound for $MIS(G_{\mathcal{O}}(v))$ and then plug it back in Eq.~\ref{eq:mis_gr} to derive the overall bound. There are two different upper-bounds for $MIS(G_{\mathcal{O}}(v))$ as follows:
	
	\textit{Upper-bound 1}: Assume that $E_{\mathcal{O}} \neq \emptyset$, otherwise the greedy algorithm finds the optimal solution which is $V_{\mathcal{O}}$ and the theorem automatically holds, we establish the first bound:
	\begin{align}
		MIS(G_{\mathcal{O}}(v)) & \leq \max\{1, |\mathcal{N}(v)|\} = \max \{1, deg_{G_{\mathcal{O}}}(v) \} \nonumber \\ & \leq \max_{u \in V} deg_{G_{\mathcal{O}}}(u) \leq \Delta
		\label{eq:delta}
	\end{align}
	Note that $\Delta$ is the maximum degree in the initial graph which is always larger or equal to the maximum degree at any point after removing nodes and edges.
	
	\textit{Upper-bound 2}: Let us consider the case that the optimal solution for $MIS$ on $G_{\mathcal{O}}(v)$ is the subset of $\mathcal{N}(v)$. Note that each node $v \in V_{\mathcal{O}}$ is associated with an occurrence which contains a set of edges in $G_j$, denoted by $E_j(v) = \{ e_{v1}, e_{v2}, \dots \}$, and, $u$ and $v$ are neighbors if $E_j(u) \cap E_j(v) \neq \emptyset$. Moreover, any pair of nodes $u, w$ in the optimal independent set solution of $G_{\mathcal{O}}(v)$ satisfies $E_j(u) \cap E_j(w) = \emptyset$. Thus, the largest possible independent set is $\{u_1, \dots, u_{|E_j(v)|}\}$ such that $E_j(u_l) \cap E_j(v) = \{e_{vl}\}$ ($u_l$ must be a minimal neighbor - sharing a single edge between their corresponding occurrences) and $E_j(u_l) \cap E_j(u_k) = \emptyset, \forall l \neq k$ (every node is independent of the others). Therefore, we derive the second upper-bound:
	\begin{align}
		MIS(G_{\mathcal{O}}(v)) \leq |\{u_1, \dots, u_{|E_j(v)|}\}| = |E_j(v)| \leq \Gamma.
		\label{eq:gamma}
	\end{align}
	
	Combining Eq.s~\ref{eq:delta} and \ref{eq:gamma} gives us a stronger upper-bound:
	\begin{align}
		MIS(G_{\mathcal{O}}(v)) \leq \min\{\Delta, \Gamma\}.
		\label{eq:ub}
	\end{align}
	
	Plugging Eq.~\ref{eq:ub} back to Eq.~\ref{eq:mis_gr}, we obtain
	\begin{align}
		& MIS(G_{\mathcal{O}}) \leq \sum_{v \in \mathcal{O}} \min\{\Delta, \Gamma\} = |\mathcal{O}| \times \min\{\Delta, \Gamma\}, \nonumber \\
		\;\;\Rightarrow &\;\; |\mathcal{O}| \geq \frac{1}{\min\{\Delta, \Gamma\}} MIS(G_{\mathcal{O}}). \nonumber
		\;\; \hspace{3cm}
	\end{align}
\end{proof}

\begin{algorithm}[!t]
	\caption{{\sc Greedy Algorithm for $\mis$}}
	\small{
		\begin{algorithmic}[1]
			\Require Motif occurrence graph $\go = (\vo,\eo)$ for graph $G_j$
			\Ensure Set $\mO \subset \vo$ of independent nodes (non-overlapping occurrences)
			
			\State Solution set $\mO = \emptyset$
			\While{$\vo \neq \emptyset$}
			\State $v_{\min} = \arg\min_{v \in \vo} deg_{\go}(v)$
			\State $\mathcal{O} := \mathcal{O} \cup \{v_{\min}\}$
			\State Remove $v_{\min}$ and $\mN(v_{\min})$ along with incident edges from $\vo$ and $\eo$ accordingly
			\EndWhile
			
		\end{algorithmic}
	}
	\label{alg:mis}
\end{algorithm}

{\em Complexity analysis:~} Algo.~\ref{alg:mis} requires finding the node with minimum degree (line~3) and removing nodes and incident edges (line~5), hence, a na\"ive implementation would have $O(|\vo|^2 + |\eo|)$ time complexity; $O(|\vo|^2)$ for searching nodes with minimum degree at most $|\vo|$ times (line~3) and $O(\eo)$ for updating node degrees (line~5). If we use a priority heap to maintain node degrees for a quick minimum search, time complexity becomes $O((|\vo| + |\eo|) \log |\vo|)$; $|\vo| \log |\vo|$ for constructing the heap initially and $|\eo| \log |\vo|$ for updating the heap every time a node and its neighbors are removed (includes deleting the degrees for the removed nodes and updating those of all $\mN(v)$'s neighbors).

Algo.~\ref{alg:mis} takes $O(|\vo| + |\eo|)$ to store the occurrence graph.

\subsubsection{\bf Memory-Efficient Solution: $\mis$ w/out Explicit $\go$}

Notice that the size of each $\go$ can be very large due to the combinatorial number of $k$-node subgraphs induced, which demands huge memory and time. Here we present an efficient version of the greedy algorithm that drastically cuts down the input size to MIS. Our new algorithm leverages a property stated as follows.

\begin{property}[Occurrence Degree Equality]
	\label{pro:uniform_deg}
	The nodes in the occurrence graph $\go$ of $G_j$ that correspond to occurrences that are enlisted based on subgraphs induced on the {\em same} node set in $G_j$ have exactly the {\em same} degree. 
\end{property}

Let us introduce a new definition called \emph{simple occurrence}. A simple occurrence $sg_{ij} = (V_{ij}, E_{ij})$ is a simple subgraph without the edge multiplicities of $G_j$ that is isomorphic to a motif. 
Let $\{sg_{1j}, sg_{2j}, \dots, sg_{tj}\}$ be the set of all simple occurrences in $G_j$.
Note that two simple occurrences may correspond to the same motif.

Recall that the greedy algorithm only requires the \textit{node degrees} in $\go$. Since all the nodes corresponding to occurrences that ``spring out'' of a simple occurrence have the same degree (Property \ref{pro:uniform_deg}), we simply use the simple occurrence as a ``compound node'' in place of all those degree-equivalent occurrences.\footnote{Given $sg_{ij}$, the number of its degree-equivalent occurrences in $G_j$ is the product of edge multiplicities $m(u,v)$ for $(u,v)\in E_{ij}$.}
As such, the nodes in $\go$ now correspond to simple occurrences \textit{only}.
The degree of each node (say, $sg_{ij}$) is calculated as follows.
\begin{align}
\label{eq:deg}
 \hspace{-0.1in}deg_{\go}(sg_{ij})  =&  \prod_{(u,v) \in E_{ij}}  m(u,v)   - \bigg[ \prod_{(u,v) \in E_{ij}} \big( m(u,v) - 1 \big) \bigg] - 1  \nonumber \\ 
& + \sum_{\{sg_{lj} | E_{lj} \cap E_{ij} \neq \emptyset\}}   \Big ( \prod_{(u,v) \in E_{lj} \backslash E_{ij}} m(u,v) \Big ) \;\; \times
 &\hspace{-0.1in}\Big( \prod_{(u,v) \in E_{lj} \cap E_{ij}} m(u,v) - \prod_{(u,v) \in E_{lj} \cap E_{ij}} \big(m(u,v) - 1\big) \Big), 
\end{align}
where $m(u,v)$ is the multiplicity of edge $(u,v)$ in $G_j$. 
The first line of Eq. \eqref{eq:deg} depicts the ``internal degree'' among the degree-equivalent occurrences that originate from $sg_{ij}$. The rest captures the ``external degree'' to other occurrences that have an overlapping edge.

\vspace{0.05in}
\noindent
{\bf Memory-Efficient Greedy $\mis$ Solution.~} The detailed steps of our memory-efficient greedy $\mis$ algorithm is given in Algo. \ref{alg:mem_greedy_mis}.
We first calculate degrees of all simple occurrences (line 2) and then sequentially select the one with the minimum degree, denoted $sg_{i^*j}$, included in the solution list $\mSO$ (lines 5, 6).
To account for this selection, we decrease multiplicities of all its edges $(u,v)\in E_{i^{*}j}$ by 1 (line 7) and recalculate the degrees of simple occurrences that overlap with $sg_{i^{*}j}$ (lines 8-12). If one of those simple occurrences that has an intersecting edge set with that of $sg_{i^{*}j}$ contains at least one edge with multiplicity equal to 0 (due to decreasing edge multiplicities in line 7), a special value of $deg_{\max}+1$ (line 3) is assigned as its degree. This is to signify that this compound node contains no more occurrences, and it is not to be considered in subsequent iterations (line 4).


\begin{algorithm}[!t]
	\caption{{\sc Memory-efficient Greedy  $\mis$}} 
	\small{
		\begin{algorithmic}[1]
			\Require Simple occurrences $sg_{1j}, \dots, sg_{tj}$, edge multiplicities in $G_j$
			\Ensure A list $\mSO$ of simple occurrences
			
			\State $\mSO = \emptyset$
			\State		Calculate degrees of all simple occurrences $deg_{\go}(sg_{ij}), \forall i = 1\ldots t$ using Eq.~\eqref{eq:deg}
			\State			$deg_{\max} = \max_{i = 1\ldots t} deg_{\go}(sg_{ij})$
			\While{$\exists i\in \{1\ldots t\} \;\; \text{s.t.}\;\; deg_{\go}(sg_{ij}) < deg_{\max} + 1$}
			\State		$i^* = \arg\min_{i \in\{ 1\ldots t\}} deg_{\go}(sg_{ij})$
			\State		$\mSO = \mSO \cup \{sg_{i^*j}\}$
			\State		$m((u,v)) = m((u,v)) - 1\;,\; \forall (u,v) \in E_{i^*j}$
			\For{$l = 1\ldots t, E_{lj} \cap E_{i^*j} \neq \emptyset$}
			\If{$m((u,v)) > 0 \; \forall (u,v) \in E_{lj}$}
			\State	Recalculate $deg_{\go}(sg_{lj})$ using Eq.~\eqref{eq:deg}
			\Else
			\State	$deg_{\go}(sg_{lj}) = deg_{\max} + 1$
			\EndIf
			\EndFor
			\EndWhile
		\end{algorithmic}
	}
	\label{alg:mem_greedy_mis}
\end{algorithm}

Notice that we need not even construct an occurrence graph in Algo. \ref{alg:mem_greedy_mis}, which directly operates on the set of simple occurrences in $G_j$, computing and updating degrees based on Eq. ~\eqref{eq:deg}.
Note that the same simple occurrence could be picked more than once by the algorithm. The number of times a simple occurrence appears in $\mSO$ is exactly the number of non-overlapping occurrences that spring out of it and get selected by Algo. \ref{alg:mis} on $\go$. As such $\mO$ and $\mSO$ have the same cardinality and each motif has the same number of occurrences and simple occurrences in $\mO$ and $\mSO$, respectively.
As we need the number of times each motif is used in the cover set of a graph (i.e., its usage), both solutions are equivalent.

\begin{figure}[!h]
	\centering
	\includegraphics[width = 0.55\linewidth]{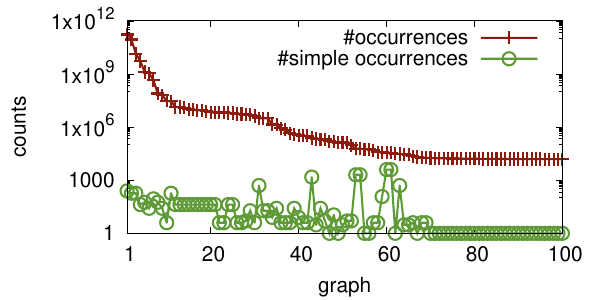} 
	\caption{Number of occurrences ($|\vo|$) and number of simple occurrences ($t$) for top 100 graphs in \sh{} database with highest number of occurrences (in decreasing order). Note that the number of simple occurrences is up to 9 orders of magnitude smaller, thus, our memory-efficient algorithm drastically reduces  the memory requirement by the same factor.}
	\label{fig:occ_counts1}
\end{figure}

{\em Complexity analysis:~}
Calculating $deg_{\go}(sg_{ij})$ of a simple occurrence takes $O(t \cdot \Gamma)$, as Eq.~\eqref{eq:deg} requires all $t$ intersecting simple occurrences $\{sg_{lj} \;\vert\; E_{lj} \cap E_{ij} \neq \emptyset\}$ where intersection can be done in $O(\Gamma)$, the maximum number of edges in a motif. Thus, Algo.~\ref{alg:mem_greedy_mis} requires $O(t^2 \cdot \Gamma)$ for line~2 to calculate all degrees.
Within the while loop (lines 4-12), the total number of degree recalculations (line 10) is bounded by 
$O(t \cdot \Gamma \cdot \max_{(u,v) \in E_j} m((u,v)))$ since each simple occurrence gets recalculated for at most $\Gamma \cdot \max_{(u,v) \in E_j} m((u,v))$ times. Finding the intersecting simple occurrences (line 8) is $(t \cdot \Gamma)$. Overall, Algo.~\ref{alg:mem_greedy_mis} time complexity is $O(t^2 \cdot \Gamma^2 \cdot \max_{(u,v) \in E_j} m(u,v))$.

The space complexity of Algo.~\ref{alg:mem_greedy_mis} is $O(t \cdot \Gamma + |E_j|)$, for $t$ simple occurrences with at most $\Gamma$ edges each, plus input edge multiplicities. Note that this is significantly smaller than the space complexity of Algo.~\ref{alg:mis}, i.e. $O(|\vo| + |\eo|)$, since $t \ll |\vo|$ and $|E_j| \ll |\eo|$. Our empirical measurements on our \sh{} dataset (see Table~\ref{tab:data} in Sec.~\ref{sec:experiment} for details) in Fig.~\ref{fig:occ_counts1} show that $|\vo|$ is up to 9 orders of magnitude larger than $t$.

\subsubsection{\bf Weighted Maximum Independent Set ($\wmis$)}
\vspace{0.05in}
$\;$
A motivation leading to our $\mis$ formulation is to cover \textit{as much of each input graph as possible} using motifs. 
The amount of coverage by an occurrence of a motif can be translated into the number of edges it contains.
This suggests a \textit{weighted} version of the maximum independent set problem, denoted $\wmis$, where we set the weight of an occurrence, i.e., node in $\go$, to be the number of edges in the motif it corresponds to. Hence, the goal is to maximize the total weight of the non-overlapping occurrences in the solution set.

\vspace{0.05in}
\noindent
{\bf Greedy $\wmis$ Solution.~}  For the weighted version of $\mis$, we also have a greedy algorithm with the \textit{same} approximation ratio as the unweighted one. The only difference from Algo.~\ref{alg:mis} is the selection of node $v_{\min}$ to remove (line 3). Let $w_v$ denote the weight of node $v$ in $\go$. Then,
\begin{align}
	v_{\min} = \arg\max_{v \in \vo} \frac{w_v}{\min\{w_v, \deg_{\go}(v)\} \cdot \max_{u \in \mN(v)} w_u}
	\label{eq:v_min_wmis}
\end{align}
Intuitively, Eq.~\eqref{eq:v_min_wmis} prefers selecting large-weight nodes (to maximize total weight), but those that do not have too many large-weight neighbors in the $\go$, which cannot be selected.

\vspace{0.1in}
\begin{theorem}
For $\wmis$ on weighted occurrence graph $\go = (\vo,\eo, W)$ where $w_v \in W$ is the weight of node $v \in \vo$, the greedy algorithm in Algo.~\ref{alg:mis} with node selection criterion in Eq.~\eqref{eq:v_min_wmis} achieves an approx. ratio of $\;\min\{ \Delta, \Gamma \}$ where $\Gamma = \max_{v \in \vo} w_v$ is the maximum number of edges in a motif.
\label{theo:wmis_approx_ratio}
\end{theorem}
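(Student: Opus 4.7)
My plan is to mirror the skeleton of the proof of Theorem~\ref{thm:mis_approx_ratio}, replacing the cardinality bounds by weighted analogues and invoking the selection rule in Eq.~\eqref{eq:v_min_wmis} at the critical step. Let $W(\mathcal{O}) = \sum_{v \in \mathcal{O}} w_v$ denote the total weight of the greedy output and $\text{WMIS}(G_\mathcal{O})$ the optimum. Because the greedy removes $\mathcal{N}_v = \{v\}\cup\mathcal{N}(v)$ together at each step, the sets $\{\mathcal{N}_v\}_{v \in \mathcal{O}}$ partition $\vo$ exactly as in Theorem~\ref{thm:mis_approx_ratio}, so restricting any weighted IS of $\go$ to $\mathcal{N}_v$ gives a feasible candidate on $G_\mathcal{O}(v)$. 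This yields the weighted decomposition
\begin{equation}
\text{WMIS}(G_\mathcal{O}) \;\leq\; \sum_{v \in \mathcal{O}} \text{WMIS}(G_\mathcal{O}(v)),
\end{equation}
and reduces the theorem to proving $\text{WMIS}(G_\mathcal{O}(v)) \leq \min\{\Delta, \Gamma\}\cdot w_v$ for every $v \in \mathcal{O}$.

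To bound each $\text{WMIS}(G_\mathcal{O}(v))$, I would observe that any optimal IS inside $G_\mathcal{O}(v)$ is either $\{v\}$ alone (weight $w_v$) or a subset $S \subseteq \mathcal{N}(v)$. Both cardinality bounds from Theorem~\ref{thm:mis_approx_ratio} transfer: $|S| \leq \deg_\go(v) \leq \Delta$ trivially, and $|S| \leq |E_j(v)| = w_v \leq \Gamma$ by the same edge-sharing argument (each node of $S$ must cover a distinct edge of $v$'s occurrence). Hence $|S| \leq \min\{w_v, \deg_\go(v)\}$, and weighting each of its nodes by at most $\max_{u \in \mathcal{N}(v)} w_u$ gives
\begin{equation}
\text{WMIS}(G_\mathcal{O}(v)) \;\leq\; \max\!\bigl\{\,w_v,\ \min\{w_v,\deg_\go(v)\}\cdot \max_{u \in \mathcal{N}(v)} w_u\,\bigr\}.
\end{equation}
The first entry of the $\max$ is trivially at most $\min\{\Delta,\Gamma\}\cdot w_v$ since $\min\{\Delta,\Gamma\}\ge 1$; it remains to bound the second.

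The second entry, after rearrangement, reduces to showing $f(v) \geq 1/\min\{\Delta,\Gamma\}$, where $f(v) = w_v / \bigl(\min\{w_v, \deg_\go(v)\}\cdot \max_{u \in \mathcal{N}(v)} w_u\bigr)$ is exactly the quantity maximized in Eq.~\eqref{eq:v_min_wmis}. I would establish this by exhibiting, at the moment $v$ is selected, a concrete competing node $u^\star$ of maximum weight in the current subgraph: since $w_{u^\star} \geq w_u$ for every $u \in \mathcal{N}(u^\star)$, one has $\max_{u \in \mathcal{N}(u^\star)} w_u \leq w_{u^\star}$, and therefore $f(u^\star) \geq 1/\min\{w_{u^\star},\deg_\go(u^\star)\} \geq 1/\min\{\Gamma,\Delta\}$. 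Because greedy picks $v$ to maximize $f$, we get $f(v) \geq f(u^\star) \geq 1/\min\{\Delta,\Gamma\}$, closing the per-neighborhood bound; summing over $v \in \mathcal{O}$ then yields $\text{WMIS}(G_\mathcal{O}) \leq \min\{\Delta,\Gamma\}\cdot W(\mathcal{O})$.

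The main obstacle I anticipate is precisely this witness step: the bound on $f(v)$ relies on producing, at every iteration, an available node whose weight dominates that of its current neighbors, and then verifying that this node survives the removals from earlier iterations (so that it is a legitimate competitor in the $\arg\max$). The max-weight node in the current graph is a natural choice, but some care will be needed to handle corner cases (e.g., isolated nodes, where $f$ is undefined and the greedy trivially wins, or ties in weight), and to ensure that every quantity in the definition of $f$ is evaluated in the current subgraph rather than in the initial $\go$, so that the $\Delta$ and $\Gamma$ bounds apply consistently.
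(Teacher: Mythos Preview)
Your proposal is correct and follows essentially the same approach as the paper: decompose $\text{WMIS}(G_\mathcal{O})$ over the neighborhoods $\mathcal{N}_v$, bound $\text{WMIS}(G_\mathcal{O}(v))$ by $\min\{w_v,\deg_\go(v)\}\cdot\max_{u\in\mathcal{N}(v)}w_u$, and then use the maximum-weight node $u^\star$ (the paper's $v^*$) as a witness to show the greedy score satisfies $f(v)\ge f(u^\star)\ge 1/\min\{\Delta,\Gamma\}$. Your treatment is slightly more careful than the paper's sketch (you explicitly handle the $\{v\}$-only independent set and isolated-node corner cases), but the core argument is the same.
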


\begin{proof}
	The proof is similar to that of Theorem~\ref{thm:mis_approx_ratio} with the key realization that the total weight of the optimal solution set of $WMIS$ on subgraph $G_{\mathcal{O}}(v)$ is upper-bounded by $\min\{w_v, \deg_{G_{\mathcal{O}}}(v)\} \cdot \max_{u \in \mathcal{N}(v)} w_u$ and when a node $v$ is selected from $G_{\mathcal{O}}$ by the greedy algorithm, due to the selection condition, we have,
	\begin{align}
	& \frac{w_v}{\min\{w_v, \deg_{G_{\mathcal{O}}}(v)\} \cdot \max_{u \in \mathcal{N}(v)} w_u} \geq \frac{w_{v^*}}{\min\{w_{v^*}, \deg_{G_{\mathcal{O}}}(v^*)\} \cdot \max_{u \in N(v^*)} w_u} \nonumber \\
	& \geq \frac{1}{\min\{w_{v^*}, \deg_{G_{\mathcal{O}}}(v^*)\}} \geq \frac{1}{\min \{\Delta, \Gamma\}},
	\end{align}
	
	\noindent where $v^* = \max_{u \in V_{\mathcal{O}}} w_u$, then $\forall u \in \mathcal{N}(v^*), w_{v^*} \geq w_u$. $\qed$
\end{proof}

\textbf{Memory-efficient Greedy $\wmis$ Solution.~} Similar to the unweighted case, we can derive a memory-efficient greedy algorithm for $\wmis$ since Property~\ref{pro:uniform_deg} holds for both degree $deg_{\go}(v)$ and $w_v$---the two core components in selecting nodes (Eq.~\eqref{eq:v_min_wmis}) in greedy algorithm for $\wmis$---since all occurrences of the same simple occurrence have the same number of edges.

\begin{algorithm}[!t]
	\caption{{\sc Motif Table Search}}
	\small{
		\begin{algorithmic}[1]
			\Require Database $\mG = \{G_1, \dots, G_J\}$, candidate motifs $\mC$, node labels $\mT$,
			$(W)MIS$ solution $\mSO_j \; \forall G_j\in  \mG$
			\Ensure $\mt$ containing set of motifs $\mM$

			\State $\text{Cnt}_j(s,d) = 0 \;\;, \text{for } j=1\ldots J, \text{and } \forall s,d \in \mT$
			
			\For{$G_j=(V_j,E_j)\in \mG$}
			\ForEach{$(u,v) \in E_j$} 
			\State $\text{Cnt}_j(t(u),t(v)) := \text{Cnt}_j(t(u),t(v)) + m(u,v)$
			\EndFor	
			\EndFor		
			
			\State $\mM = \{(s, d) | s,d \in \mT \text{ and } \exists j \in [1,J] \text{ where } \text{Cnt}_j(s,d) > 0\}$
			
			\While{$\mC \neq \emptyset$}			
			\For{$g=(V,E)\in \mC$}
			\For{$G_j\in \mG$}
			\State $\mO(g,\coverset_j) = \{sg \simeq g | sg \in \mSO_j \}$
			\State $\text{Cnt}_{jg} := \text{Cnt}_j$
			\ForEach{$(u,v) \in E$} 
			\State $\text{Cnt}_{jg}(t(u),t(v))$:=$\text{Cnt}_{jg}(t(u),t(v))$$-$$|\mO(g,\coverset_j)|$ 	
			\EndFor	
			\EndFor
			\State $\mM_{g} = \mM \cup \{g\}  \backslash 
			\{(s, d) | s,d \in \mT, 
			\text{Cnt}_{jg}(s,d) = 0 \;\forall j\}$	
			
			\State Get $L(\mM_{g}, \mG)$ using $\mO(g_k,\coverset_j)\; \forall g_k\in \mM_{g}$ s.t. $n_k\geq 3$ and
			$usage_{G_j}(s,d) := \text{Cnt}_{jg}(s,d)\; \forall(s,d)\in \mM_{g}$; $\forall j$
			
			\EndFor	
			
			\State $g_{\min} = \arg\min_{g \in \mC} L(\mM_g, \mathcal{G})$
			\If{$L(\mM_{g_{\min}}, \mathcal{G}) \leq L(\mathcal{M}, \mathcal{G})$}
			\State $\mM := \mM_{g_{\min}}\;$, 
			$\;\mC:=\mC\backslash \{g_{\min}\}\;$, 
			$\;\text{Cnt}_j:= \text{Cnt}_{jg_{\min}}$
			\Else
			\State {\bf break}
			\EndIf
			
			\EndWhile
		\end{algorithmic}
	}
	\label{alg:search}
\end{algorithm}

\subsection{Step 2: Building the Motif Table}

The $(W)MIS$ solutions, $\mSO_j$'s, provide us with non-overlapping occurrences of $k$-node motifs ($k\geq 3$) in each $G_j$.
The next task is to identify the subset of those motifs to include in our motif table $\mt$ so as to minimize the total encoding length in Eq. \eqref{obj}.
We first define the set $\mC$ of {\em candidate motifs}:
\beq
\mC = \{ g\simeq sg \;|\; sg \in \bigcup_{j} \mSO_j \}.
\eeq

We start with encoding the graphs in $\mG$ using the simplest code table that contains only the $2$-node motifs.
This code table, with optimal code lengths for database $\mG$, is called the {\em Standard Code Table}, denoted by $\smt$.
It provides the optimal encoding of $\mG$ when nothing more is known than the frequencies of labeled edges (equal to $usage$s of the corresponding $2$-node motifs), which are assumed to be fully independent.
As such, $\smt$ does not yield a good compression of the data but provides a practical bound.

To find a better code table, we use the best-first greedy strategy: Starting with $\mt:=\smt$, we try adding each of the candidate motifs in $\mC$ one at a time. Then, we pick the `best' one that leads to the largest reduction in the total encoding length. We repeat this process with the remaining candidates until no addition leads to a better compression or all candidates are included in the $\mt$, in which case the algorithm terminates. 

The details are given in Algo. \ref{alg:search}.
We first calculate the $usage$ of $2$-node motifs per $G_j$ (lines 1-4), and set up the $\smt$ accordingly (line 5).
For each candidate motif $g\in \mC$ (line 7) and each $G_j$ (line 8) we can identify the occurrences of $g$ in $G_j$'s cover set, $\mO(g,\coverset_j)$, which is equivalent to the simple occurrences selected by Algo. \ref{alg:mem_greedy_mis} that are isomorphic to $g$ (line 9).
When we insert a $g$ into $\mt$, $usage$ of some $2$-node motifs, specifically those that correspond to the labeled edges of $g$, decreases by $|\mO(g,\coverset_j)|$; the $usage$ of $g$ in $G_j$'s encoding (lines 10-12).
Note that the $usage$ of $(k\geq 3)$-node motifs already in the $\mt$ do not get affected,
since their occurrences in each $\mSO_j$, which we use to cover $G_j$, do \textit{not} overlap; i.e., their uses in covering a graph are independent.
As such, updating $usage$s when we insert a new motif to $\mt$ is quite efficient.
Having inserted $g$ and updated $usage$s, we remove $2$-node motifs that reduce to zero $usage$ from $\mt$ (line 13), and compute the total encoding length with the resulting $\mt$ (lines 14-15).
The rest of the algorithm (lines 16-20) picks the `best' $g$ to insert that leads to the largest savings, if any, or otherwise quits. 


\section{Experiments}
\label{sec:experiment}
%

{\bf Datasets.} Our work is initiated by a collaboration with industry, and \method is evaluated on large real-world datasets containing 
all transactions of 2016 (tens to hundreds of thousands transaction graphs) 
from 3 different companies, anonymously \sh, \hw, and \kd (proprietary) summarized in Table~\ref{tab:data}. These do not come with ground truth 
 anomalies.
For quantitative evaluation, our expert collaborators 
inject two types of anomalies into each dataset based on domain knowledge (Sec. \ref{ssec:performance}), and also qualitatively verify 
the detected anomalies
 from an accounting perspective (Sec. \ref{ssec:cs}).
 
\begin{figure}
	\centering
	\begin{subfigure}[t]{0.235\textwidth}
		\includegraphics[width=\linewidth]{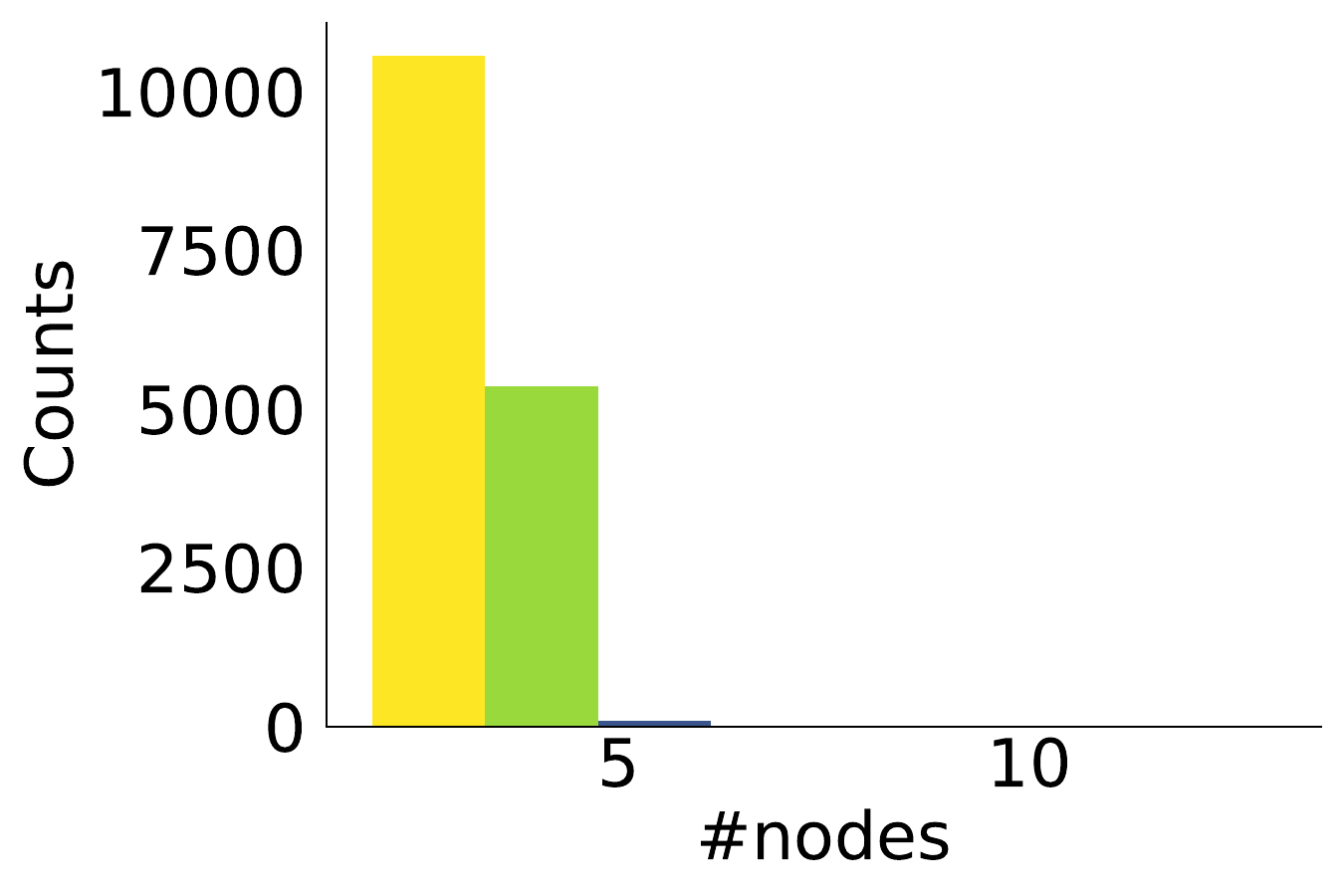}
		\caption{Distribution of the number of nodes.}
		\label{fig:sh_nodes}
	\end{subfigure}
	\hfill
	\begin{subfigure}[t]{0.235\textwidth}
		\includegraphics[width=\linewidth]{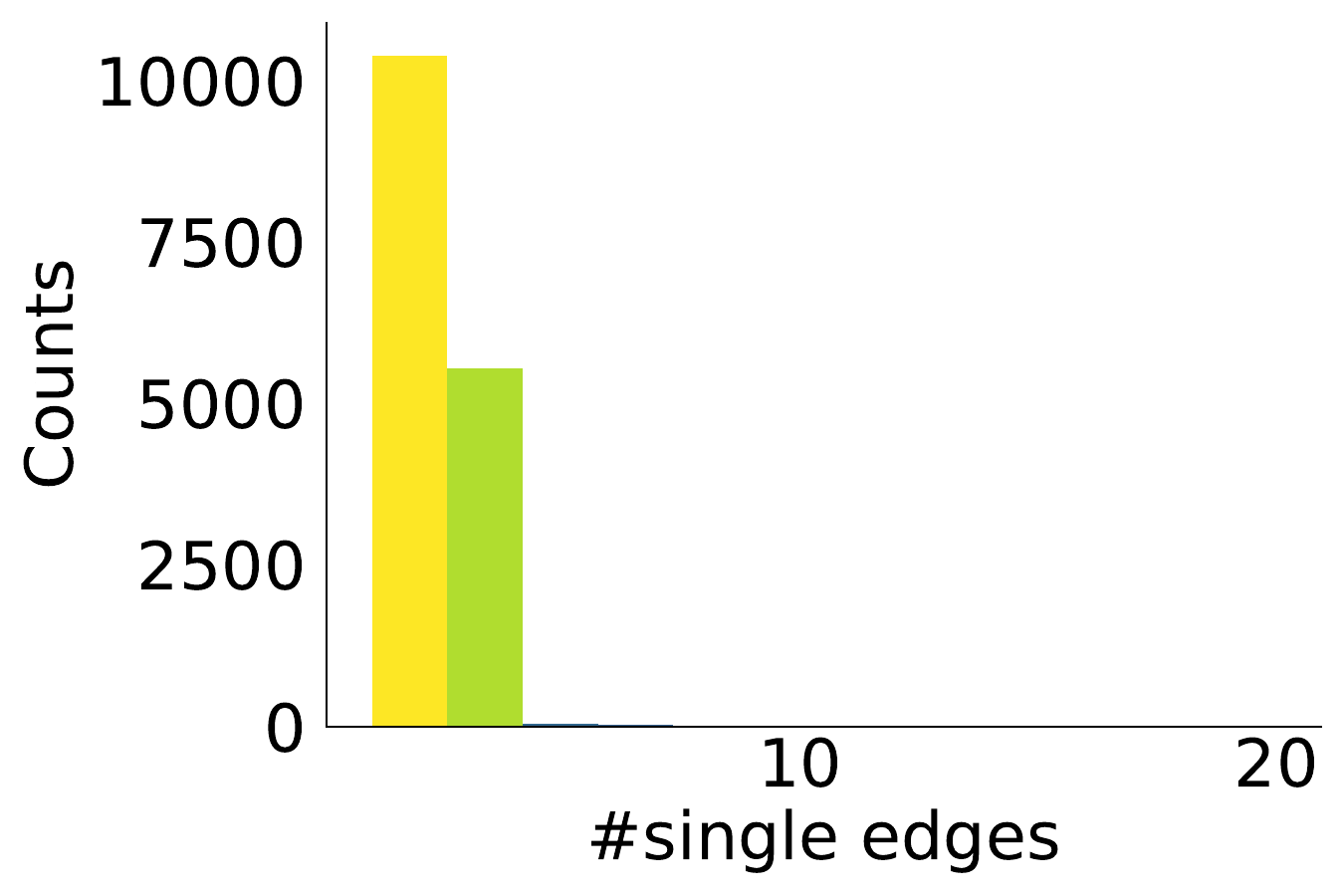}
		\caption{Distribution of the number of single edges.}
		\label{fig:sh_edges}
	\end{subfigure}
	\hfill
	\begin{subfigure}[t]{0.235\textwidth}
		\includegraphics[width=\linewidth]{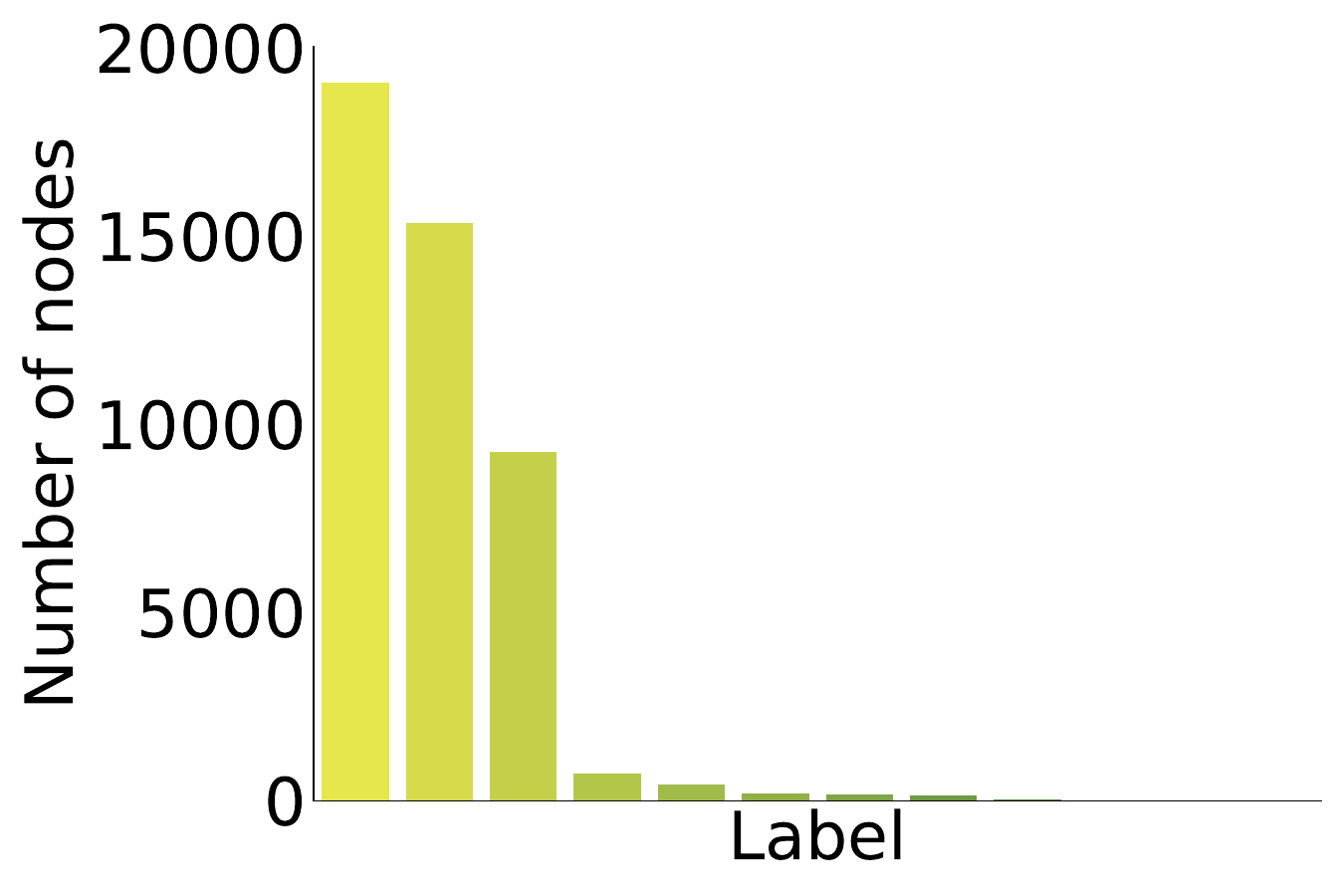}
		\caption{Distribution of node labels.}
		\label{fig:sh_node_type}
	\end{subfigure}
	\hfill
	\begin{subfigure}[t]{0.235\textwidth}
		\includegraphics[width=\linewidth]{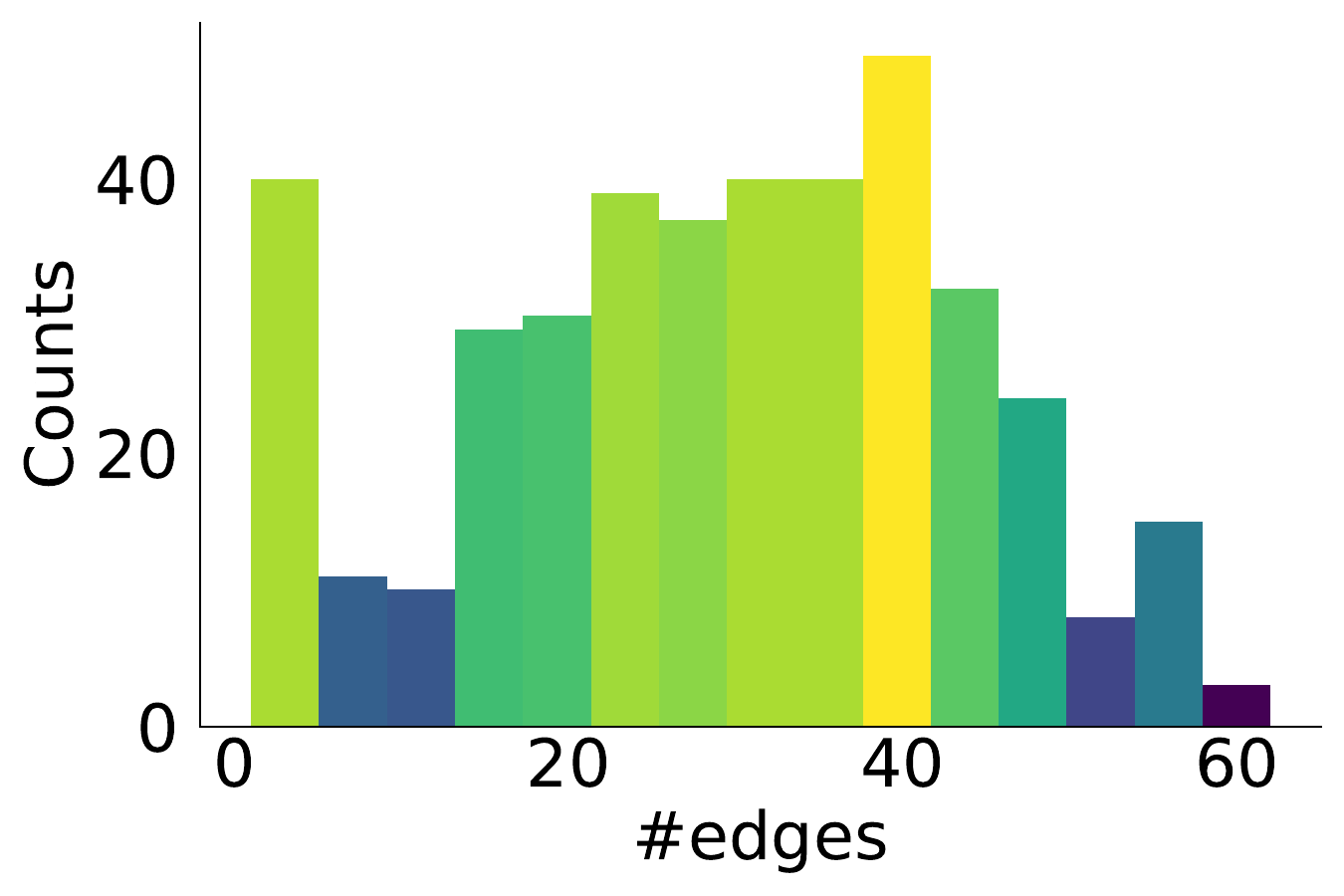}
		\caption{Distribution of multiplicities for a pair of node labels.}
		\label{fig:sh_edge_type}
	\end{subfigure}

	\caption{Statistical summary of graph characteristics in \sh{} database}
\end{figure}
 
Since our transaction data are not shareable, to facilitate reproducibility of the results, we generate synthetic data that resemble the real-world datasets used. In particular, we add \shsyn{} that contains random graphs generated based on statistical characteristics of \sh{} as follows: 1) A graph in \shsyn{} has the number of nodes randomly selected following its distribution in \sh{} depicted in Figure~\ref{fig:sh_nodes}; 2) The number of single edges in a graph is drawn randomly following the distribution in \sh{} illustrated in Figure~\ref{fig:sh_edges} (If the graph has more edges than the maximum number - $n\times(n-1)/2$, we restart the process); 3) For each node, we randomly assign its label following the label distribution in \sh{} as shown in Figure~\ref{fig:sh_node_type}; 4) For each single edge, we then randomly generate its multiplicity based on the distribution of the corresponding node label pair (an example is given in Figure~\ref{fig:sh_edge_type}). At the end, we have the same number of graphs with similar characteristics as \sh{} and share this data along with our code.

We also study the public Enron database \cite{hooi2017graph}, consisting of daily email graphs of its 151 employees over 3 years  surrounding the financial scandal. Nodes depict employee email addresses and edges indicate email exchanges. Each node is labeled with the employee's department (Energy Operations, Regulatory and Government Affairs, etc.) and edge multiplicity denotes the number of emails exchanged. 

\begin{table}[!h]
	\centering
	\vspace{-0.05in}
	\caption{Summary statistics of graph datasets used in experiments.}
	\label{tab:data}
	\begin{tabular}{lrccr}
		\toprule
		\textbf{Name} &\#Graphs & \#Node labels  & \#Nodes  
		& \#Multi-edges \\ 
		\midrule
		\sh & 38,122 & 11 & [2, 25] 
		& [1, 782] \\
		\shsyn & 38,122 & 11 & [2, 25] 
		& [1, 782] \\
		\hw & 90,274 & 11 & [2, 25] 
		& [1, 897]\\
		\kd & 152,105 & 10 & [2, 91] 
		& [1, 1774]\\
		\enron & 1,139	& 16 & [2, 87] & [1, 1356]\\	
		\bottomrule
	\end{tabular}
	\vspace{-0.15in}
\end{table}

\hide{
\subsection{Algorithm Analysis}

\subsubsection{\bf Compression Rate}
We first compare the total compression length using $SMT$ and the final $MT$ produced by \method in Fig.~\ref{fig:compress_rate} for graphs with at least 3 nodes.
\fgr[Compression rate of the final $MT$ to $SMT$. \label{fig:compress_rate}]{width = 0.85\linewidth}{FIG/compress_rate.pdf}
Compared to $\smt$, the compression rate of the final $\mt$ is around 4\% on \sh dataset due to the fact that most graphs are very small, e.g., only 150 having at least 5 nodes. On the other hand, the compression rates on \hw and \kd are much higher, >10\%, since they have many more large graphs, e.g., a few thousands with at least 5 nodes.

\subsubsection{\bf Efficiency}
$\;$\\
We evaluate the efficiency of our memory-efficient greedy algorithm (Alg.~\ref{alg:mem_greedy_mis}) compared to the standard greedy (Alg.~\ref{alg:mis}) by measuring the \#simple occurrences used in Alg.~\ref{alg:mem_greedy_mis} and \#occurrences required for Alg.~\ref{alg:mis} on \sh dataset in Fig.~\ref{fig:occ_counts}.
\fgr[\#occurrences and \#simple occurrences for top-100 graphs with highest \#occurrences. \label{fig:occ_counts}]{width = 0.85\linewidth}{FIG/occ_counts.pdf}
We see that the \#occurrences are up to 9 orders of magnitude larger than \#simple occurrences. In fact, we were not able to run Alg.~\ref{alg:mis} due to the humongous \#occurrences needed to be enumerated for the first few graphs in Fig.~\ref{fig:occ_counts}. In the contrary, \#simple occurrences are just in order of thousands and can be managed easily regarding memory and time.


\subsection{Motif Analysis}

All motifs in all three companies capture operating activities---such as Operating Revenues/Sales, Operating Expenses, Short- and Long-term Operating Assets)---consistent with the general accounting bookkeeping practices in that journal entries capture day-to-day company activities that tend to be repetitive over time and across different customers.

The key differences among motifs from different companies are about (1) the types and (2) the number of operating accounts. For example, LOA only appear in HW and ORS do not appear in KD. Separately, while OE appear in motifs from all three datasets, the number of distinct OE nodes is different across the three. 

These differences can be indicative of the different nature of these companies such as the line of business (merchandising vs. manufacturing, retail vs. wholesale businesses). For example, it is quite common for manufacturing companies to generate many journal entries involving LOA due to the heavy use of many different machinery, leading to repetitive depreciation bookings---linking LOA to OE, as in HW.  
At the same time, compared to retail companies, the number of distinct sales journal entries may be smaller for wholesale companies.

Finally, it should be noted that different motifs across companies can also be a result of different accounting policies/procedures deployed by different management teams, who choose to separate or combine several aspects 
when booking the activities.

\reminder{graph-visualization of motifs in $\mt$ (along with usages) and domain semantics}
}

\vspace{-0.05in}
\subsection{Anomaly Detection Performance}
\label{ssec:performance}

We show that \method is substantially better in detecting graph anomalies as compared to a list of baselines across various performance measures. 
The anomalies are injected by domain experts, which mimic schemes related to money laundering, entry error or malfeasance in accounting, specifically: 

\begin{compactitem} 
\setlength{\itemindent}{-.125in}
\item \emph{Path injection (money-laundering-like)}: ($i$) Delete a random edge $(u,v) \in E_j$, and ($ii$) Add a length- 2 or 3 path $u$--$w$(--$z$)--$v$ where {at least one edge} of the path is rare (i.e., exists in only a few $G_j$'s).
The scheme mimics money-laundering, where money is transferred through multiple hops rather than directly from the source to the target.

\item \emph{Label injection (entry-error or malfeasance)}: ($i$) Pick a random node $u \in V_j$, and ($ii$) Replace its label $t(u)$ with a random label $t \neq t(u)$. This scheme mimics either simply an entry error (wrong account), or malfeasance that aims to reflect larger balance on certain types of account (e.g., revenue) in order to deceive  stakeholders. 
\end{compactitem}

For path injection, we choose 3\% of 
graphs and inject anomalous paths that replace 10\% of edges (or 1 edge if 10\% of edges is less than 1). For label injection, we also choose 3\% of 
graphs and label-perturb 10\% of the nodes (or 1 node if 10\% of nodes is less than 1). We also tested with different severity levels of injection, i.e., 30\% and 50\% of edges or nodes, and observed similar results to those with 10\%. The goal is to detect those graphs with injected paths or labels.

\begin{table}[!h]
	\centering
	\caption{Detection performance of \textit{path anomalies} in various datasets. Numbers in \textbf{bold} highlight best results in the columns and \underline{underlined} numbers refer to the runner-up. In all performance measures, \method{} achieves significantly higher results than the others and maintains a large gap to the runner-up, which is not stable but varies depending on dataset and performance measure.}
	\begin{subtable}{\textwidth}
		\centering
		\caption{\sh dataset}
		\label{tab:anomaly_1_sh}
		\begin{tabular}{lrrrrr}
			\toprule
			\textbf{Method} & Prec@10 & Prec@100 & Prec@1000 & AUC & AP\\
			\midrule
			\method & \textbf{1.000} & \textbf{0.920} & \textbf{0.386} & \textbf{0.958} & \textbf{0.548} \\
			\smt & 0.100 & 0.280 & 0.352 & \underline{0.932} & \underline{0.413} \\
			\glocalkd & 0.400 & 0.252  & 0.331 & 0.916 & 0.405 \\
			\gbad & 0.200 & 0.495 & \underline{0.356} & 0.906 & 0.373 \\
			GF+\iforest & 0.100 & 0.120 & 0.237 & 0.926 & 0.210\\
			G2V+\iforest{} & \underline{0.800} & {0.750} & 0.308 & 0.886 & 0.383\\
			DGK+\iforest{} & 0.100 & 0.030 & 0.025 & 0.712 & 0.050\\
			\entropy & 0.100 & \underline{0.800} & 0.219 & 0.821 & 0.347\\
			\multi & 0.000 & 0.040 & 0.027 & 0.643 & 0.049\\
			\bottomrule
		\end{tabular}
		\vspace{0.2in}
	\end{subtable}
	\begin{subtable}{\textwidth}
		\centering
		\caption{\hw dataset}
		\label{tab:anomaly_1_hw}
		\begin{tabular}{lrrrrr}
			\toprule
			\textbf{Method} & Prec@10 & Prec@100 & Prec@1000 & AUC & AP\\
			\midrule
			\method & \textbf{0.900} & \textbf{0.990} & \textbf{0.999} & \textbf{0.995} & \textbf{0.772} \\
			\smt & 0.600 & 0.440 & 0.784 & 0.906 & \underline{0.733} \\
			\glocalkd & 0.000 & 0.269 & 0.531 & 0.843 & 0.563 \\
			\gbad & \underline{0.800} & 0.710 & 0.685 & 0.930 & 0.555 \\
			GF+\iforest & 0.400 & 0.230 & 0.497 & 0.959 & 0.429 \\
			G2V+\iforest{} & 0.000 & 0.100 & 0.819 & 0.824 & 0.380\\
			DGK+\iforest{} & 0.300 & 0.140 & 0.023 & 0.858 & 0.097\\
			\entropy & 0.300 & \underline{0.820} & \underline{0.896} & \underline{0.981} & 0.571 \\
			\multi & 0.000 & 0.020 & 0.029 & 0.719 & 0.106 \\
			\bottomrule
		\end{tabular}
		\vspace{0.2in}
	\end{subtable}
	\begin{subtable}{\textwidth}
		\centering
		\caption{\shsyn dataset}
		\label{tab:anomaly_1_shsyn}
		\begin{tabular}{lrrrrr}
			\toprule
			\textbf{Method} & Prec@10 & Prec@100 & Prec@1000 & AUC & AP\\
			\midrule
			\method & \textbf{0.600} & \textbf{0.860} & \textbf{0.412} & \textbf{0.982} & \textbf{0.664} \\
			\smt & 0.300 & 0.440 & 0.212 & 0.911 & 0.398 \\
			\glocalkd & 0.300 & 0.495 & 0.258 & 0.932 & 0.473 \\
			\gbad & 0.300 & 0.523 & \underline{0.318} & 0.937 & \underline{0.468} \\
			GF+\iforest & 0.200 & 0.420 & 0.252 & 0.894 & 0.340\\
			G2V+\iforest{} & \underline{0.500} & \underline{0.640} & 0.310 & \underline{0.943} & 0.450\\
			DGK+\iforest{} & 0.100 & 0.030 & 0.025 & 0.712 & 0.050\\
			\entropy & 0.300 & 0.540 & 0.294 & 0.921 & 0.433\\
			\multi & 0.000 & 0.050 & 0.036 & 0.662 & 0.102\\
			\bottomrule
		\end{tabular}
	\end{subtable}
\end{table}


\begin{table}[!h]
	\centering
	\caption{Detection performance of \textit{label anomalies} in various datasets (\textbf{bold} and \underline{underlined} numbers refer to the best and runner-up results). We observe notably higher performance of \method{} compared to the competing baselines, where it ranks either at the top or is the runner-up across the board. In (b), $-$ depicts cases for \gbad{} and G2V+\iforest{}, which failed to complete within 5 days.}
	\begin{subtable}{\textwidth}
		\centering
		\caption{\hw dataset}
		\label{tab:anomaly_2_hw}
		\begin{tabular}{lrrrrr}
			\toprule
			\textbf{Method} & Prec@10 & Prec@100 & Prec@1000 & AUC & AP\\
			\midrule
			\method & \textbf{0.800} & \textbf{0.720} & \textbf{0.709} & \underline{0.918} & \underline{0.359} \\
			\smt & 0.000 & 0.100 & 0.174 & 0.883 & 0.192 \\
			\glocalkd & 0.600 & 0.640 & 0.663 & 0.902 & 0.301 \\
			\gbad & \underline{\textbf{0.800}} & \underline{0.710} & \underline{0.685} & \textbf{0.920} & \textbf{0.555} \\
			GF+\iforest & 0.200 & 0.080 & 0.027 & 0.832 & 0.092 \\	
			G2V+\iforest{} & 0.000 & 0.030 & 0.030 & 0.499 & 0.030 \\
			DGK+\iforest{} & 0.100 & 0.030 & 0.038 & 0.801 & 0.074 \\
			\entropy & 0.100 & 0.030 & 0.117 & 0.623 & 0.062 \\
			\multi & 0.000 & 0.020 & 0.032 & 0.505 & 0.030 \\
			\bottomrule
		\end{tabular}
		\vspace{0.2in}
	\end{subtable}
	\begin{subtable}{\textwidth}
		\centering
		\caption{\kd dataset}
		\label{tab:anomaly_2_kd}
		\begin{tabular}{lrrrrr}
			\toprule
			\textbf{Method} & Prec@10 & Prec@100 & Prec@1000 & AUC & AP\\
			\midrule
			\method & \textbf{0.800} & \textbf{0.940} & \textbf{0.863} & \textbf{0.716} & \textbf{0.403} \\
			\smt & 0.200 & 0.190 & 0.122 & \underline{0.715} & 0.082\\
			\glocalkd & \underline{0.700} & \underline{0.450} & \underline{0.495} & 0.702 & \underline{0.096} \\
			\gbad & $-$ & $-$ & $-$ & $-$ & $-$ \\
			GF+\iforest & 0.300 & 0.060 & 0.038 & 0.650 & 0.053 \\
			G2V+\iforest{}& $-$ & $-$ & $-$ & $-$ & $-$ \\
			DGK+\iforest{} & 0.100 & 0.090 & 0.068 & 0.644 & 0.061 \\
			\entropy & 0.400 & 0.240 & 0.130 & 0.541 & 0.040 \\
			\multi & 0.000 & 0.040 & 0.035 & 0.498 & 0.030 \\
			\bottomrule
		\end{tabular}
		\vspace{0.2in}
	\end{subtable}
	\begin{subtable}{\textwidth}
		\centering
		\caption{\shsyn dataset}
		\label{tab:anomaly_2_shsyn}
		\begin{tabular}{lrrrrr}
			\toprule
			\textbf{Method} & Prec@10 & Prec@100 & Prec@1000 & AUC & AP\\
			\midrule
			\method & \textbf{0.600} & \textbf{0.640} & \textbf{0.459} & \textbf{0.920} & \textbf{0.350} \\
			\smt & \underline{{0.400}} & \underline{0.450} & 0.162 & 0.720 & 0.163 \\
			\glocalkd & \underline{{0.400}} & 0.440 & 0.321 & 0.874 & 0.289 \\
			\gbad & \underline{{0.400}} & {0.420} & \underline{0.337} & \underline{0.881} & \underline{0.315} \\
			GF+\iforest & 0.200 & 0.250 & 0.140 & 0.831 & 0.127 \\	
			G2V+\iforest{} & 0.000 & 0.020 & 0.021 & 0.465 & 0.020 \\
			DGK+\iforest{} & 0.200 & 0.110 & 0.063 & 0.739 & 0.125 \\
			\entropy & 0.200 & 0.110 & 0.061 & 0.553 & 0.043 \\
			\multi & 0.000 & 0.014 & 0.024 & 0.475 & 0.020 \\
			\bottomrule
		\end{tabular}
	\end{subtable}
\end{table}

\vspace{0.05in}
\noindent
\textbf{Baselines}: We compare \method with:
\cbit
\setlength{\itemindent}{-.1in}

\item 
\smt: A simplified version of \method that uses the Standard Motif Table to encode the graphs. 

\item \textsc{GLocalKD} \cite{ma2022deep}: A recent graph neural network (GNN) based approach that leverages knowledge distillation to train and was shown to achieve good performance and robustness in semi-supervised and unsupervised settings. We used the default setting of 3 layers, 512 hidden dimensions, and 256 output dimensions as recommended in the original paper \cite{ma2022deep}. Independently, we also performed sensitivity analysis on varying the hyper-parameter settings and found that the default one returned top performance.

\item 
\gbad \cite{journals/ida/EberleH07}: The closest existing approach for anomaly detection in node-labeled graph databases (See Sec. \ref{sec:related}). Since it cannot 
handle multi-edges,
we input the $G_j$'s as simple graphs setting all the edge multiplicities to 1. 

\item Graph Embedding + \iforest: We pair different graph representation learning approaches with state-of-the-art outlier detector \iforest\cite{liu2012isolation}, as they  
cannot directly detect anomalies. We consider the following combinations:
\begin{itemize}
	\item Graph2Vec \cite{narayanan2017graph2vec} (G2V)+\iforest: 
	  G2V cannot handle edge multiplicities, thus we set all to 1.
	\item Deep Graph Kernel\cite{yanardag2015deep}(DGK)+\iforest
	\item GF+\iforest: Graph (numerical) features (GF) include number of edges of each label-pair and number of nodes of each label.
\end{itemize}

\item 
\entropy quantifies skewness of the distribution on the non-zero number of edges over all possible label pairs as the anomaly score. 
A smaller entropy implies there exist rare label-pairs and hence higher anomalousness.

\item 
\multi uses sum of edge multiplicities as anomaly score. We tried other simple statistics, e.g., \#nodes, \#edges, their sum/product, which do not perform well.

\ceit

\vspace{0.05in}
\noindent \textbf{Performance measures}: Based on the ranking of graphs by anomaly score, we measure Precision at top-$k$ for $k=\{10,100,1000\}$, and also report Area Under ROC Curve (AUC) and Average Precision (AP) that is the area under the precision-recall curve. Since most of the methods, including \method, are deterministic, we run most of the methods once and report all the measures. For Graph2Vec and Deep Graph Kernel with some randomization, we also run multiple times to check the consistency of performance. Additionally, for those with hyper-parameters, we use the default settings from the corresponding publicly available source codes.

\subsubsection{\bf Detection of path anomalies} We report detection results 
on \sh and \hw datasets in Tables~\ref{tab:anomaly_1_sh} and~\ref{tab:anomaly_1_hw} (performance on \kd is similar and omitted for brevity) and on \shsyn{} in Table~\ref{tab:anomaly_1_shsyn}.

\method consistently outperforms all baselines by a large margin across all performance measures in detecting path anomalies. More specifically, \method provides 16.9\% improvement over the \underline{runner-up} ({underlined}) on average across all measures on \sh, and 10.2\% on \hw.
Note that the runner-up is not the same baseline consistently
across different performance measures. 
Benefits of motif search is evident looking at the superior results over \smt.
G2V+\iforest{} produces decent performance w.r.t. most measures but is still much lower than those of \method. Similar observations are present on \shsyn{}.


\subsubsection{\bf Detection of label anomalies} Tables~\ref{tab:anomaly_2_hw} and~\ref{tab:anomaly_2_kd} report detection results on the two larger datasets, \hw and \kd (performance on \sh is similar and omitted for brevity), and Table~\ref{tab:anomaly_2_shsyn} provides results on \shsyn{}. Note that \gbad{} and G2V+\iforest{} failed to complete within 5 days on \kd{}, thus their results are absent in Table~\ref{tab:anomaly_2_kd}. Note that \kd{} is a relatively large-scale dataset, having both larger and more graphs than the other datasets.

In general, we observe similar performance advantage of \method over the baselines for label anomalies. 
The exceptions are \gbad and \glocalkd{} which perform comparably, and appear to suit better for label anomalies, potentially because changing node labels disrupts structure more than the addition of a few short isolate paths.
\gbad however does not scale to \kd, and the \underline{runner-up} on this dataset performs significantly worse. Similar observations are also seen on \shsyn{} dataset.

\subsection{Case Studies}
\label{ssec:cs}

\textbf{Case 1 - Anomalous transaction records:}
The original accounting databases we are provided with by our industry partner do not contain any ground truth labels.
Nevertheless, they beg the question of whether \method unearths any dubious journal entries that would raise an eyebrow 
from an economic bookkeeping perspective.
In collaboration with their accounting experts, we analyze the top 20 cases as ranked by \method. 
Due to space limit, we elaborate on one case study from each dataset/corporation as follows.




In \sh, we detect a graph with a large encoding length yet relatively few (27) multi-edges, as shown in Fig. \ref{fig:shcases}, consisting of several small disconnected components.
In accounting terms the transaction is extremely complicated, likely the result of a (quite rare) ``business restructuring'' event. In this single journal entry there exist many independent simple entries, involving only one or two operating-expense (OE) accounts, while other edges arise from compound entries (involving more than three accounts).
This event involves reversals (back to prepaid expenses) as well as re-classification of previously booked expenses.
The fact that all these bookings are recorded within a single entry leaves room for manipulation of economic performance and mis-reporting via re-classification, which deserves an audit for careful re-examination.



\hide{ 
We also identify a couple of graphs with node C (cash) at the center, shown in Fig. \ref{fig:shcases}(b).
Here, edges among cash accounts occur typically due to currency transfers (e.g., US\$ to EUR).
Edges from Cash to most SOL nodes are also typical---\sh pays vendors for various expenses for which they receive bills. Edges from Cash to most OE (operating expense) nodes carry economic meaning (SH pays vendors for various expenses), however, these direct links bypass the Accounts Payable stage and can be anomalous. For example, there exist edges from Cash to Cost of Good Sold that are very troubling as most Cost of Good Sold receives inflow from inventory accounts.
}

%
%

\hide{ 
The cluster of the most anomalous graphs depicted in Fig. \ref{fig:lenvsedges} contains 16 graphs that 
are very similar to each other (hence the similar scores), 3 of which are shown in Fig. \ref{fig:shcases}(c).
An edge from an OE (operating expense) node to an ORS (operating revenue/sales) node represents a reduction of both expense and sales amounts. Linking of these two accounts is highly unusual from an economic bookkeeping perspective for two reasons. (This two-node motif only takes place 18 times in \sh.) First, sales and expense cycles are separate within an organization.
Second, the monetary price-levels are different between the two cycles: selling vs. buying prices.
}
\setlength{\intextsep}{2pt}
\setlength{\columnsep}{9pt}%
\setlength{\tabcolsep}{-4pt}
\begin{figure}[!h]
	\centering
	\begin{tabular}{cc}
		\centering
		\includegraphics[width = 0.24\linewidth]{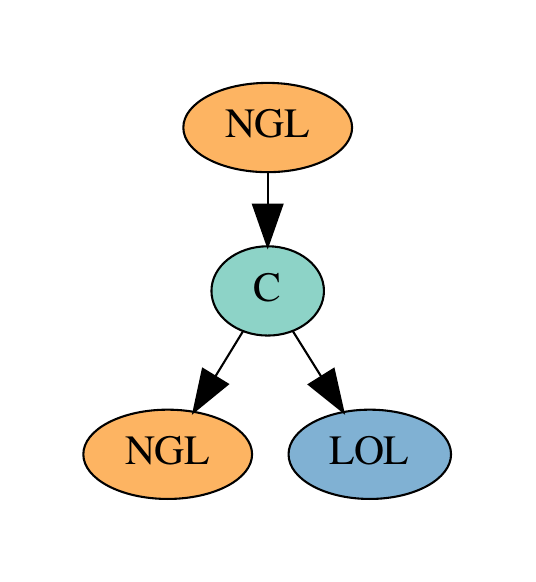}  &
		\includegraphics[width = 0.30\linewidth]{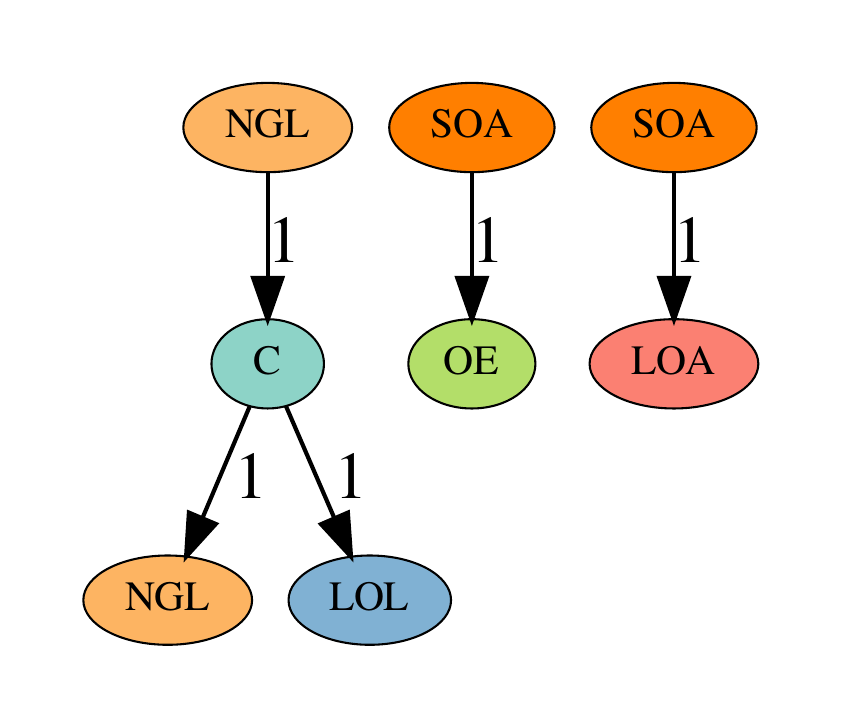}  
	\end{tabular}
	\caption{(left) A rare motif, (right) Anomalous graph in \hw. \label{fig:hwcase}}
\end{figure}
\begin{figure}[!h]
	\centering
	\begin{tabular}{c}
		\begin{tabular}{c}
			\centering
			\includegraphics[width = \linewidth]{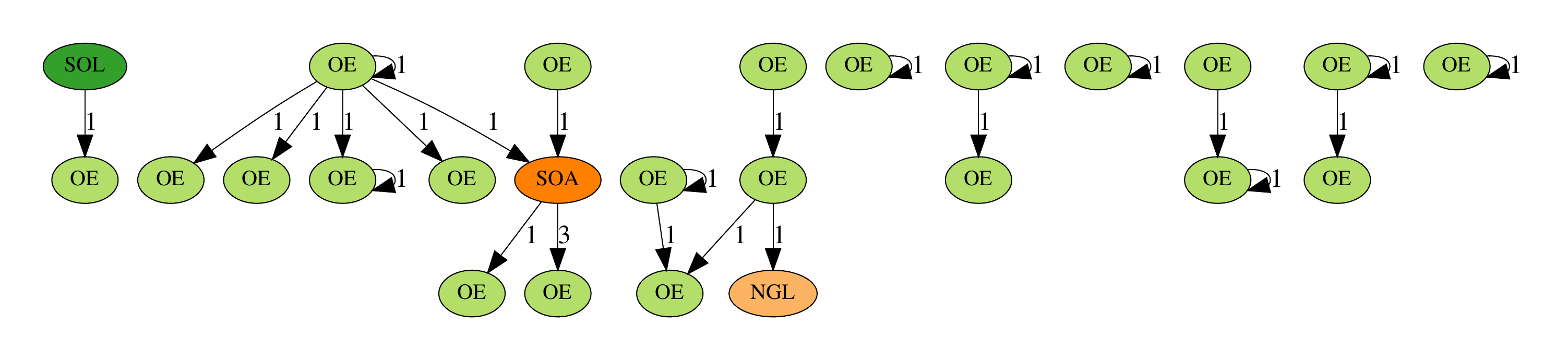} 
		\end{tabular}
	\end{tabular}
	\caption{Anomalous graph in \sh. \label{fig:shcases}}
\end{figure}

In Fig.~\ref{fig:hwcase} (left) we show a motif with sole usage of 1 in the dataset, 
which is used to cover an
anomalous graph (right) in \hw.
Edge NGL (non-operating gains\&losses) to C (cash) depicts an unrealized foreign exchange gain and is quite unusual from an economic bookkeeping perspective.
This is because, by definition, unrealized gains and losses do not involve cash.
Therefore, proper booking of the creation or relinquishment of such gains or losses should not involve cash accounts.
Another peculiarity is the three separate disconnected components, each of which represents very distinct economic transactions: one on a bank charge related to a security deposit,
one on health-care and travel-related foreign-currency business expense (these two are short-term activities),
and a third one on some on-going construction (long-term in nature).
It is questionable why these diverse transactions are grouped into a single journal.
Finally, the on-going construction portion involves reclassifying a long-term asset into a suspense account, which requires follow-up attention and final resolution.

\begin{wrapfigure}[9]{r}{0.33\linewidth}
	\vspace{-0.15in}
	\centering
	\begin{tabular}{cc}
		\centering
		\includegraphics[width = 0.41\linewidth]{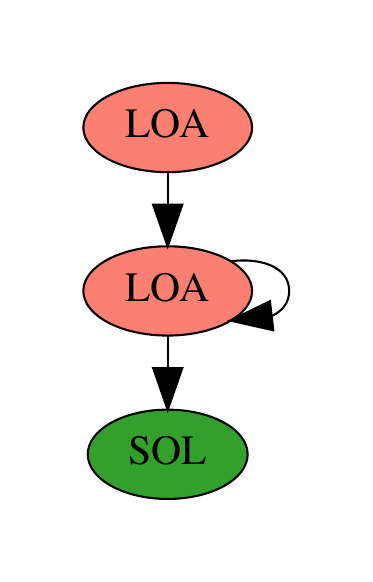} &
		\includegraphics[width = 0.42\linewidth]{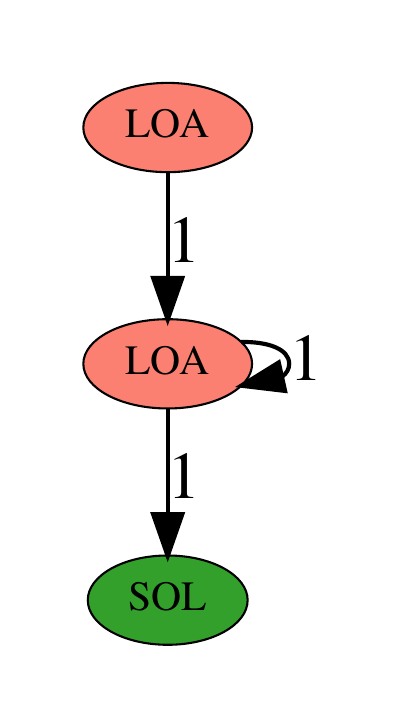}  
	\end{tabular}
	\vspace{-0.15in}
	\caption{(left) A rare motif, (right) Anomalous graph in \kd. \label{fig:kdcase}}
\end{wrapfigure} 

Finally, the anomalous journal entry from \kd~involves the motif shown in Fig. \ref{fig:kdcase} (left) where the corresponding graph is the exact motif with multiplicity 1 shown on the (right). This motif has sole usage of 1 in the dataset and is odd from an accounting perspective. 
Economically, it represents giving up an existing machine, which is a long-term operating asset (LOA), 
in order to reduce a payable or an outstanding short-term operating liability (SOL) owed to a vendor.
Typically one would sell the machine and get cash to payoff the vendor with some gains or losses. We also note that the $\mt$ does not contain the 2-node motif LOA$\rightarrow$SOL.
The fact that it only shows up once, within single-usage motif, makes it suspicious.

Besides the quantitative evidence on detection performance, these number of case studies provide qualitative support to the effectiveness of \method in identifying anomalies of interest in accounting domain terms, worthy of auditing and re-examination.

\textbf{Case 2 - Enron scandal:} We study the correlation between \method's anomaly scores of the daily email exchange graphs 
and the major events in Enron's timeline. 
Fig. \ref{fig:enron} shows that days with large anomaly scores mark drastic discontinuities in time, which
coincide with important events related to the financial scandal\footnote{
	\url{http://www.agsm.edu.au/bobm/teaching/BE/Enron/timeline.html}, \url{https://www.theguardian.com/business/2006/jan/30/corporatefraud.enron}}.
It is also noteworthy that the anomaly scores follow an increasing trend over days, capturing the escalation of events up to key personnel testifying in front of Congressional committees. 

\setlength{\intextsep}{0pt}
\setlength{\tabcolsep}{4pt}
\begin{figure*}[!t]
	\begin{minipage}{\linewidth}
		\centering \includegraphics[width = \linewidth]{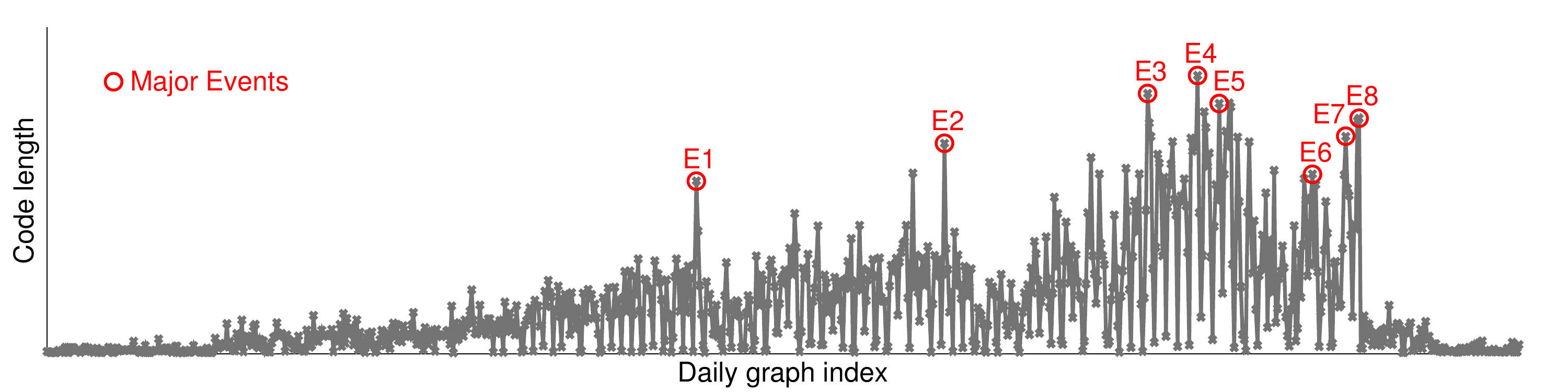} 
	\end{minipage}
	\centering
	\begin{tabular}{l|p{0.15\textwidth}|p{0.75\textwidth}}
		\toprule
		E1 & Dec 13, 2000 & Enron announces that president and COO Jeffrey Skilling will take over as chief executive in February. 
		\\
		E2 & May 23, 2001 & Enron completes its 1,000,000-th transaction via Enron Online. \\
		E3 & Sep 26, 2001 & Employee Meeting. Kenneth Lay tells employees: Enron stock is an ``incredible bargain'' and  ``Third quarter is looking great.'' 
		\\
		E4 & Oct 24-29, 2001 & Enron sacks Andrew Fastow.  In vain 
		Lay calls 
		chairman of the Fed and the Treasury and Commerce secretaries 
		to solicit help.
		\\
		E5 & Nov 9, 2001 & Dynegy agrees to buy Enron for about \$9 billion in stock and cash. \\
		E6 & Jan 10, 2002 & DOJ confirms criminal investigation begun. Arthur Andersen announces employees in Houston Division had destroyed documents. \\
		E7 & Jan 30, 2002 & Enron names Stephen F. Cooper new CEO.\\
		E8 & Feb 07, 2002 & A. Fastow 
		invokes the Fifth Amendment before Congress; J. Skilling testifies he knew of no wrongdoing at Enron when he resigned.
		\\
		\bottomrule
	\end{tabular}
\vspace{-0.1in}
	\caption{
		Code lengths of \enron's daily email exchange graphs. Large values coincide with key events of the financial scandal.}
	\label{fig:enron}
\end{figure*}
\begin{figure}[H]
	\centering
	\begin{tabular}{p{1cm}cc}
		\centering
		& \includegraphics[width = 0.35\linewidth]{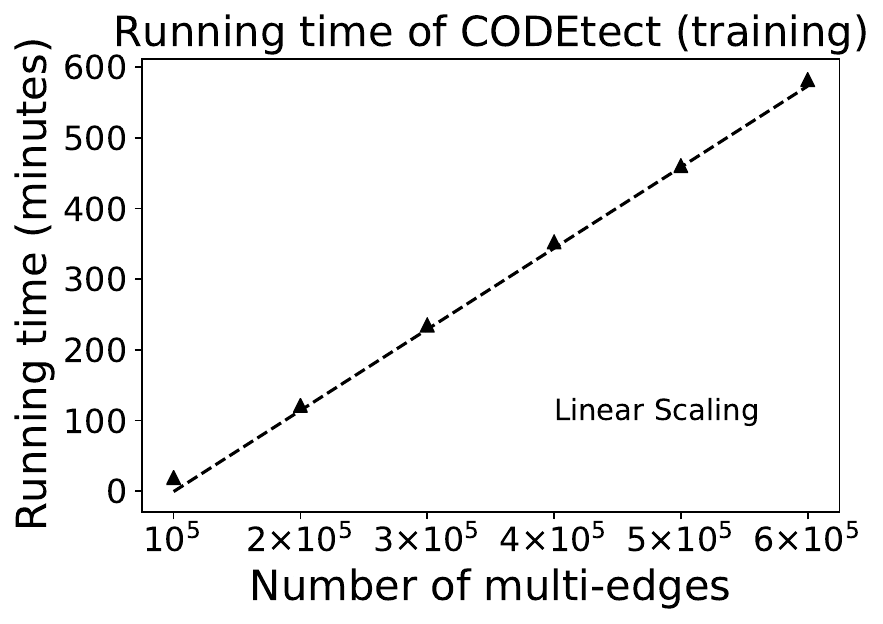}  &
		\includegraphics[width = 0.37\linewidth]{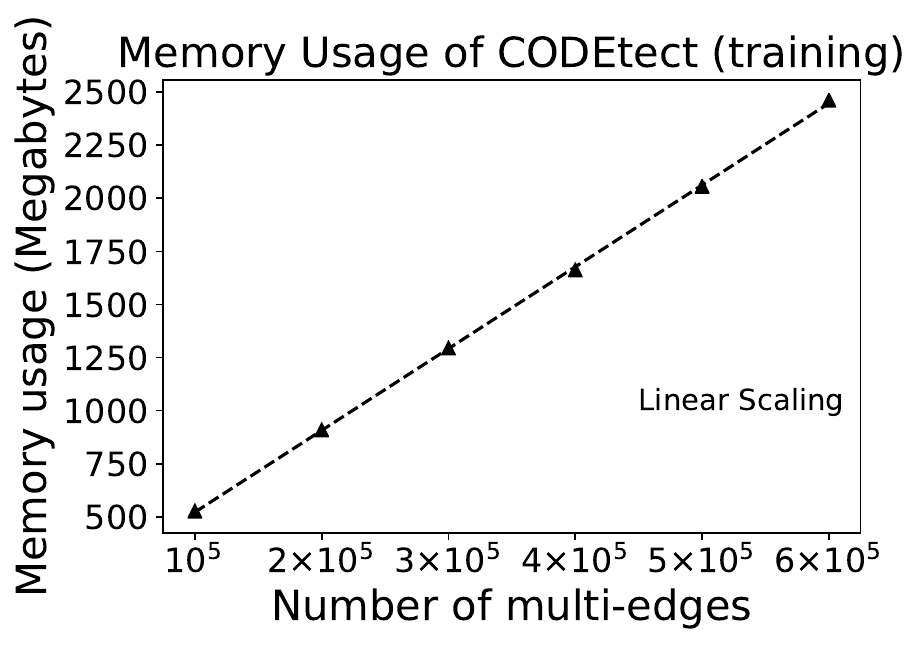}  \\
		& \includegraphics[width = 0.35\linewidth]{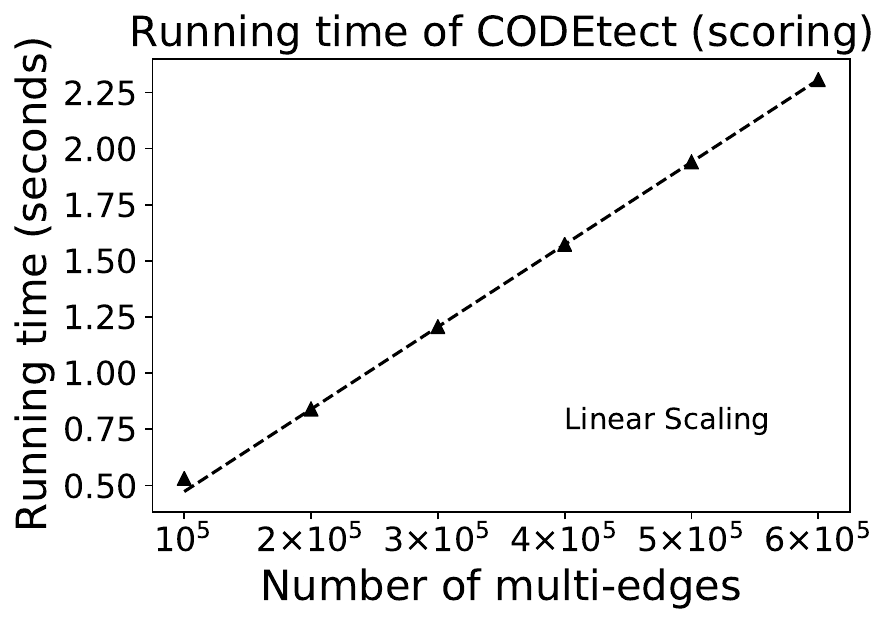}  &
		\includegraphics[width = 0.37\linewidth]{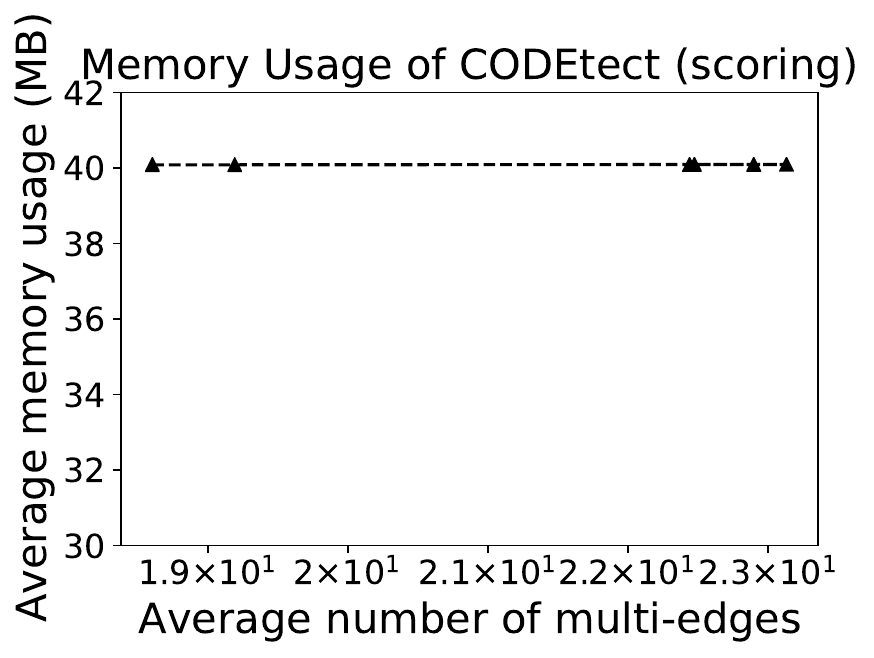}  
	\end{tabular}
	\vspace{-0.1in}
	\caption{(left) Time, (right) Memory consumption (training refers to running the full motif table search, while scoring assumes the motif table has been learned and only executes graph encoding for anomaly scoring.) Note that \method{} training and scoring both scale linearly in the number of multi-edges w.r.t. both time and memory usage.}
	\label{fig:scale}
	\vspace{-0.15in}
\end{figure}

\subsection{Scalability}
To showcase the scalability of \method{}, in regard to running time and memory consumption, we randomly selected subsets of graphs in \kd database with different sizes, i.e., $\{40, 45, 50, 55, 60, 65, 70, 75, 80, 85, 90, 95, 100\} \times 10^3$. We re-sample each subset of graphs three times and report the averaged result for each setting. A summary of the results is presented in Figure~\ref{fig:scale}. We observe a linear scaling of \method with increasing size of input graphs as measured in number of multi-edges with respect to both time and memory usage.

\section{Conclusion}
\label{sec:conclusion}

We introduced \method, (to our knowledge) the first graph-level anomaly detection method for \textit{node-labeled multi-graph databases}; which appear in numerous real-world settings such as social networks and financial transactions, to name a few. 
The main idea is to identify key network motifs that encode the database concisely and employ compression length as the anomaly score.
To this end, we presented (1) novel lossless encoding schemes and (2) efficient search algorithms.
Experiments on transaction databases from three different corporations quantitatively showed  that \method significantly outperforms the prior and more recent GNN based baselines across datasets and performance metrics. Case studies, including the Enron database, presented qualitative evidence to \method's effectiveness in spotting instances that are noteworthy of auditing and re-examination.


\section*{Acknowledgments}{
	
	This work is sponsored by NSF CAREER 1452425 and the PwC Risk and Regulatory Services Innovation Center at Carnegie Mellon University. Any conclusions expressed in this material are those of the author and do not necessarily reflect the views, expressed or implied, of the funding parties.
	
}

\bibliographystyle{ACM-Reference-Format}
\bibliography{BIB/refs}

%
%
%
%
%
%
%
%

\end{document}